\pdfoutput=1

\documentclass[twocolumn,nofootinbib,preprintnumbers]{revtex4-1}

\usepackage[vmargin = 0.85in, hmargin = 0.6in]{geometry}



\usepackage{amsmath,amsthm,amssymb,amsfonts}

\usepackage{array}

\usepackage{multirow}

\usepackage{mathtools}
\mathtoolsset{centercolon}

\usepackage{etoolbox}


\newcommand{\refx}[1]{\ref{#1}}


\usepackage{tikz}
\usetikzlibrary{calc,positioning,arrows,shapes.symbols,shapes.geometric}


\usepackage[colorlinks = true]{hyperref}

\hypersetup{
  pdftitle = {Bound entangled states with private key and their classical counterpart},
  pdfauthor = {Maris Ozols, Graeme Smith, John A.~Smolin}
}

\usepackage{color}
\definecolor{darkred}  {rgb}{0.5,0,0}
\definecolor{darkblue} {rgb}{0,0,0.5}
\definecolor{darkgreen}{rgb}{0,0.5,0}
\hypersetup{
  urlcolor   = blue,         
  linkcolor  = darkblue,     
  citecolor  = darkgreen,    
  filecolor  = darkred       
}


\def\step{18pt}
\def\dangle{15}

\newcommand{\grid}[2]{
  \foreach \i in {0,...,#2}{
    \draw[gridline] (\i*\step-\step/2,\step/2) -- +(0,-#1*\step);
  };
  \foreach \i in {0,...,#1}{
    \draw[gridline] (-\step/2,-\i*\step+\step/2) -- +(#2*\step,0);
  };
  \pgfmathparse{#2-1}
  \foreach \i in {0,...,\pgfmathresult}{
    \node (a\i) at (\i*\step,\step) {$\i$};
  };
  \pgfmathparse{#1-1}
  \foreach \i in {0,...,\pgfmathresult}{
    \node (b\i) at (-\step,-\i*\step) {$\i$};
  };
  \node at (#2*\step/2-\step/2,1.6*\step) {Bob};
  \node[rotate = 90] at (-1.6*\step,-#1*\step/2+\step/2) {Alice};
}

\newcommand{\nodes}[2]{
  \pgfmathparse{#1-1}
  \foreach \i in {0,...,\pgfmathresult} {
    \pgfmathparse{#2-1}
    \foreach \j in {0,...,\pgfmathresult}
      \node[circ] (\i\j) at (\j*\step,-\i*\step) {};
  }
}

\newcommand{\drawAB}[3]{
\begin{tikzpicture}[
  gridline/.style = {black},
  circ/.style = {circle, draw = black, fill = black,
                 inner sep = 0mm, minimum size = 0.13*\step},
  state/.style = {black, thick}]
  \grid{#1}{#2}
  \nodes{#1}{#2}
  #3
\end{tikzpicture}
}

\newcommand{\emptynode}[1]{\node[circ, white, text = black] at (#1) {\yuzz}}


\newtheorem{theorem}{Theorem}
\newtheorem{corollary}[theorem]{Corollary}
\newtheorem{lemma}{Lemma}
\theoremstyle{definition}
\newtheorem*{example}{Example}



\newcommand{\fnote}[1]{\footnote{#1}}


\DeclarePairedDelimiter{\set}{\lbrace}{\rbrace}
\DeclarePairedDelimiter{\abs}{\lvert}{\rvert}


\newcommand{\pt}{^{\mathsf{\Gamma}}} 

\DeclareMathOperator{\tr}{Tr}


\newcommand{\mx}[1]{\begin{pmatrix}#1\end{pmatrix}}


\newcommand{\ket}[1]{|#1\rangle}

\newcommand{\proj}[1]{|#1\rangle\langle#1|}
\newcommand{\ketbra}[2]{|#1\rangle\langle#2|}



\input{"fig-yuzz.tex"}
\newcommand{\yuzz}{
\begin{tikzpicture}[scale=0.00535]
  \yuzzcode
\end{tikzpicture}}

\newcommand{\ECPA}{EC+PA}


\csdef{3x3}{0.0057852}
\csdef{4x4}{0.0293914}
\csdef{4x5}{0.0480494}
\csdef{5x6}{0.0378462}
\csdef{6x5}{0.0354342}
\csdef{4H4}{0.0213399}



\begin{document}


\title{Bound entangled states with private key and their classical counterpart}

\author{Maris Ozols}
\author{Graeme Smith}
\author{John A. Smolin}
\affiliation{IBM TJ Watson Research Center, 1101 Kitchawan Road, Yorktown Heights, NY 10598}

\begin{abstract}
Entanglement is a fundamental resource for quantum information processing. In its pure form, it allows quantum teleportation
and sharing classical secrets. Realistic quantum states are noisy and their usefulness is only partially understood. Bound-entangled states are central to this question---they have no distillable entanglement, yet sometimes still have a private classical key. We present a construction of bound-entangled states with private key based on classical probability distributions.  From this emerge states possessing a new classical analogue of bound entanglement, distinct from the long-sought bound information. We also find states of smaller dimensions and higher key rates than previously known. Our construction has implications for classical cryptography: we show that existing protocols are insufficient for extracting  private key from our distributions due to their ``bound-entangled'' nature.  We propose a simple extension of existing protocols that can extract key from them.
\end{abstract}

\maketitle

\section{Introduction} \label{sect:Introduction}

A fundamental goal of cryptography is to establish secure communication between two parties, Alice and Bob, in the presence of an eavesdropper Eve. This can be achieved by allowing Alice and Bob to encrypt their communication using a key obtained from some initially shared resource---a joint probability distribution~\cite{Shannon49, Wyner75, CK78} or quantum state~\cite{E91, LoChau99}. However, this resource may not be useful in its original form---the shared key may not be perfectly random, private, or identical for both parties. Thus, classical key distillation---the process of generating perfectly random, private and identical key from a given tripartite probability distribution $P_{ABE}$ shared among the three parties---and the similar quantum task of entanglement distillation from a tripartite quantum state $\ket{\psi}_{ABE}$, are problems of fundamental importance~\cite{Wyner75, CK78, BBR88, AhlswedeCsiszar, Maurer, BBCM95, BDSW96}.

Private key is weaker than entanglement---it can be obtained by measuring Einstein-Podolsky-Rosen (EPR) pairs~\cite{E91}. Thus, one can distill private key from a quantum state by first distilling EPR pairs. This strategy is not optimal in general due to the existence of \emph{private bound entanglement}---entangled states from which EPR pairs cannot be distilled, but nevertheless a private key can be obtained~\cite{BoundKey, SmallBoundKey}.

Private bound-entangled states demonstrate a striking distinction between two forms of correlation: private classical key and shared entanglement. The best rate at which Alice and Bob can generate private key from $\ket{\psi}_{ABE}$ when only Eve has access to the $E$ part of the initial state and all public messages sent by Alice and Bob is called the \emph{private key rate}, denoted by $K(\psi_{ABE})$. Now, in addition to the above, assume that Eve also has access to all ancillary trash systems that Alice and Bob have introduced during the protocol. \emph{I.e.}, all information produced during the protocol---other than the final key---becomes available to Eve once the protocol is over. In such a case, distilling key becomes much harder. In the language of \cite{BoundKey, SmallBoundKey}, the final ``key'' system cannot be protected by any ``shield'' systems kept by Alice and Bob.  In fact, as the following simple observations imply, Alice and Bob have no choice but resort to distilling a much stronger resource---entanglement. One can check that:
\def\leftmargini{3ex}
\begin{itemize}
  \item if Alice and Bob can distill entanglement, they maintain privacy from Eve even if she has access to all ancillary trash systems produced during the protocol;
  \item conversely, the only way of obtaining a resource that guarantees privacy between Alice and Bob when all trash systems are available to Eve is to distill entanglement\fnote{Trash systems include all purifying systems too, hence a shared classical key is not private from Eve in this setting.}.
\end{itemize}
Thus, the best rate of producing a private key in the more restricted scenario when all ancillary trash systems are available to Eve, is the same as \emph{entanglement distillation rate} $D(\psi_{ABE})$---the best rate at which Alice and Bob can distill EPR pairs from $\ket{\psi}_{ABE}$ via local operations and classical communication (LOCC).  Private bound-entangled states have $D(\psi_{ABE}) = 0$ and $K(\psi_{ABE}) > 0$.

\newcommand{\cd}[1]{\multicolumn{1}{|c|}{#1}}
\newcommand{\cb}[1]{\multicolumn{1}{c|}{#1}}
\def\rsep{-0.9ex}
\begin{table}[htb]
\begin{tabular}[c]{|p{2.05cm}|p{2.05cm}|p{4.00cm}|}
\hline &&\\[\rsep]
  &
  \cb{\textit{Quantum}} &
  \cb{\textit{Classical}} \\[4pt]
\hline &&\\[\rsep]
  \cd{\multirow{3}{*}{\textit{States}}}
  & \cb{$\ket{\psi}_{ABE}$} & \cb{$P_{ABE}$} \\[5pt]
  & \cb{unambiguous}        & \cb{unambiguous probability} \\
  & \cb{quantum state}      & \cb{distribution} \\[4pt]
\hline &&\\[\rsep]
  \cd{\multirow{2}{*}{\textit{Entanglement}}}
  & \cb{$D(\psi_{ABE})$}     & \cb{$K_{\rm PD}(P_{ABE})$} \\[5pt]
  \cd{\textit{distillation}}
  & \cb{EPR pairs}          & \cb{private key by public} \\
  \cd{(\textit{public trash})}
  & \cb{by LOCC}            & \cb{discussion} \\[4pt]
\hline &&\\[\rsep]
  \cd{\multirow{2}{*}{\it{Private key}}}
  & \cb{$K(\psi_{ABE})$}     & \cb{$K(P_{ABE})$} \\[5pt]
  \cd{\textit{distillation}}
  & \cb{private key}         & \cb{private key by public discus-} \\
  \cd{(\textit{private trash})}
  & \cb{by LOCC}            & \cb{sion and noisy processing} \\[4pt]
\hline
\end{tabular}
\caption{\label{tab:Dictionary}Quantum-classical dictionary for states and distillation rates.  A tripartite probability distribution $P_{ABE}$ is unambiguous if it satisfies Eqs.~(\ref{eq:OliveA}--\ref{eq:OliveE}).  The associated quantum state $\ket{\psi_{ABE}}$ is given by Eq.~\eqref{eq:psiABE}.}
\end{table}

We will show that a similar distinction exists also in the classical world.  In the classical case, a private key must be distilled from a shared probability distribution $P_{ABE}$ by public discussion between Alice and Bob.  At each step of the protocol, either Alice or Bob generates a public message from her/his random variables, followed by a stochastic map that modifies the variables.  In general, such map might not be reversible and thus partially destroy the information (we call such maps \emph{noisy processing}).  We denote by $K(P_{ABE})$ the best private key rate obtainable by such protocols (\emph{i.e.}, protocols that involve public discussion and noisy processing).  In an alternative scenario, Alice and Bob can only create new random variables but cannot modify or destroy the existing ones\fnote{This does not impose any restrictions by itself.  The crucial difference is that all auxiliary variables must be surrendered to Eve at the end of the protocol.}. Furthermore, all variables (except the ones that contain the key) become available to Eve at the end of the protocol.  We denote the best key rate of such protocols by $K_{\rm PD}(P_{ABE})$, where PD stands for \emph{public discussion} (the protocol involves only public discussion and no noisy processing).  Because in the quantum setting it is \emph{distillable entanglement} that is resistant to giving trash systems to Eve, its natural classical analogue is $K_{\rm PD}$.  The
quantum quantity corresponding to the private key achieved by including noisy processing $K(P_{ABE})$ is simply the private key obtainable by LOCC, $K(\psi_{ABE})$.  Table~\ref{tab:Dictionary} summarizes the quantities of interest.

Previous studies pursuing a classical analogue of bound entanglement~\cite{GRW02, RW03, ACM04, PA12} looked for distributions with $K(P_{ABE}) = 0$.  A particular distribution, obtained by measuring a bound-entangled quantum state, was considered in~\cite{GRW02}.  It was hoped that because the quantum state was bound, no key would be distillable from  the classical distribution.  This hope was tempered by the discovery of private bound-entangled states~\cite{BoundKey, SmallBoundKey}, whose existence demonstrates a clear distinction between secrecy and bound entanglement.  Our work establishes a similar distinction classically by giving distributions with $K_{\rm PD}(P_{ABE}) = 0$ that cannot be created by public discussion, in direct analogy with quantum bound-entangled states.  We specifically do not solve the longstanding question of whether or not there is \emph{bound information}~\cite{GRW02, RW03, ACM04, PA12}, which corresponds to $K(P_{ABE}) = 0$ according to our notation (see Table~\ref{tab:Dictionary}) and which we would prefer to call \emph{bound private key}.  It is interesting to note that in the tripartite case, an affirmative answer has been demonstrated classically~\cite{ACM04}.

\begin{figure}


\def\aa{40} 
\def\dd{0.5} 

\newcommand{\cube}[6]{
  \path (#1,#2) ++(\aa:#3*\dd) coordinate (v00#6);
  \path (v00#6) +(#4,0 ) coordinate (v01#6);
  \path (v00#6) +(0 ,#4) coordinate (v10#6);
  \path (v00#6) +(#4,#4) coordinate (v11#6);
  \path (v00#6) +(\aa:#4*\dd) coordinate (w00#6);
  \path (v01#6) +(\aa:#4*\dd) coordinate (w01#6);
  \path (v10#6) +(\aa:#4*\dd) coordinate (w10#6);
  \path (v11#6) +(\aa:#4*\dd) coordinate (w11#6);
  \draw (w10#6) -- (w00#6) -- (w01#6);
  \draw (v00#6) -- (w00#6);
  {#5}
  \draw (v00#6) -- (v10#6) -- (w10#6) -- (w11#6) -- (w01#6) -- (v01#6) -- cycle;
  \draw (v10#6) -- (v11#6) -- (v01#6);
  \draw (v11#6) -- (w11#6);
}

\newcommand{\bigcube}[1]{
  \begin{scope}[thin, scale = 0.8]
    \cube{0}{0}{0}{2}{#1}{b}
  \end{scope}
}

\definecolor{darkorange}{rgb}{0.9, 0.4, 0.0}

\newcommand{\smallcube}[3]{
  \begin{scope}[thin]
  \cube{#1}{#2}{#3}{1}{
    \draw[fill = yellow,     draw = none, opacity = 0.9] (v00s) -- (v01s) -- (v11s) -- (v10s);
    \draw[fill = darkorange, draw = none, opacity = 0.9] (v01s) -- (w01s) -- (w11s) -- (v11s) -- (v01s);
    \draw[fill = orange,     draw = none, opacity = 0.9] (v10s) -- (w10s) -- (w11s) -- (v11s) -- (v10s);
  }{s}
  \end{scope}
}


\begin{tikzpicture}
  \bigcube{
    \smallcube{0}{0}{1}
    \smallcube{1}{0}{0}
    \smallcube{1}{1}{1}
  }
\end{tikzpicture}

\caption{\label{fig:Cubes}A three-dimensional representation of an unambiguous probability distribution $P_{ABE}$.  Each axis corresponds to one of the three parties and each cube represents a triple $(a,b,e)$ such that $p(a,b,e) \neq 0$.  Intuitively, Eqs.~(\ref{eq:OliveA}--\ref{eq:OliveE}) say that the small cubes do not overlap if this block is compressed along any of the three axis.}
\end{figure}

\section{Construction} \label{sect:Construction}

\begin{figure}[!ht]
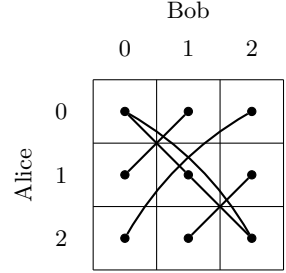


  \def\step{24pt}
\drawAB{3}{3}{
  \draw[state] (00) -- (11) -- (22);
  \draw[state] (01) -- (10);
  \draw[state] (12) -- (21);
  \draw[state] (02) .. controls +(225-\dangle:\step) and +(45+\dangle:\step).. (20);
  \draw[state] (00) .. controls +(-45+\dangle:\step) and +(135-\dangle:\step).. (22);
}

\caption{\label{fig:Olives}A graphical representation of an unambiguous distribution $P_{ABE}$ with $d_A = d_B = 3$ and $d_E = 4$. For each nonzero entry of $p(a,b,e)$ we put a dot at coordinates $(a,b)$, and connect dots corresponding to the same symbol for Eve. Note that each cell contains at most one dot due to Eq~\eqref{eq:OliveE}, and the resulting graph is a union of disjoint cliques (complete graphs) where each clique represents a different symbol for Eve. The above example has four connected components, hence $d_E = 4$. Furthermore, no two vertices from a clique share the same column or row due to Eqs.~\eqref{eq:OliveA} and~\eqref{eq:OliveB}. For PT-invariance (see Appendix~\refx{apx:PT-invariance}), the diagram in addition must also be a union of \emph{crosses}, \emph{i.e.}, pairs of edges $(a,b)\!-\!(a',b')$ and $(a,b')\!-\!(a',b)$ for some $a \neq a'$ and $b \neq b'$. The above diagram consists of three crosses: two small and one large.}
\end{figure}

Our results are based on tripartite probability distributions $P_{ABE}$ whose probabilities $p(a,b,e)$ have a special combinatorial structure\fnote{Distributions with similar properties have been considered before: tripartite distributions that satisfy only Eq.~(\ref{eq:OliveE}) appeared in~\cite{CEHHOR07}; bipartite distributions with similar properties (called bi-disjoint distributions) appeared in~\cite{NegativeClassical}.} (see Fig.~\ref{fig:Cubes}):
\begin{align}
  \forall b,e\quad \abs{\set{a : p(a,b,e) \neq 0}} & \leq 1 \label{eq:OliveA}, \\
  \forall a,e\quad \abs{\set{b : p(a,b,e) \neq 0}} & \leq 1 \label{eq:OliveB}, \\
  \forall a,b\quad \abs{\set{e : p(a,b,e) \neq 0}} & \leq 1 \label{eq:OliveE},
\end{align}
where $\abs{S}$ denotes the size of set $S$.
We call such distributions \emph{unambiguous}, since any two parties can uniquely determine the third party's variable. Such distributions have a convenient graphical representation (see Fig.~\ref{fig:Olives}), which together with $P_{AB}$ determines the full distribution $P_{ABE}$ (up to permutations on $E$).

We identify $P_{ABE}$ with a tripartite pure state
\begin{equation}
  \ket{\psi}_{ABE}
  := \sqrt{P_{ABE}}
  := \sum_{a,b,e} \sqrt{p(a,b,e)}
     \ket{a}_A \ket{b}_B \ket{e}_E,
  \label{eq:psiABE}
\end{equation}
where $\ket{a}_A$, $\ket{b}_B$, $\ket{e}_E$ are standard basis vectors for systems $A,B,E$, and with a bipartite mixed state
\begin{equation}
  \rho_{AB} := \tr_E \proj{\psi}_{ABE}
  \label{eq:rhoAB}
\end{equation}
on Alice and Bob whose purification is held by Eve. Such states have a special structure, since all eigenvectors of $\rho_{AB}$ have the same Schmidt basis.

The \emph{partial transpose}\fnote{Normally one has to specify the system on which the partial transpose is performed. However, all our density matrices are real (and thus symmetric), so the partial transpose on Alice's side is equivalent to partial transpose on Bob's side.} (PT) of $\rho_{AB}$ is defined on the standard basis as
\begin{equation}
  \bigl( \ketbra{a}{a'}_A \otimes \ketbra{b}{b'}_B \bigr)\pt
  := \ketbra{a}{a'}_A \otimes \ketbra{b'}{b}_B
  \label{eq:PT}
\end{equation}
and extended by linearity. If $\rho_{AB}$ is \emph{PT-invariant} ($\rho_{AB}\pt = \rho_{AB}$) then it has positive partial transpose and thus no distillable entanglement~\cite{Bound} (unambiguous distributions $P_{ABE}$ that yield PT-invariant states $\rho_{AB}$ are characterized in Appendix~\refx{apx:PT-invariance}).

\section{Results}

Using the properties of unambiguous distributions and Eq.~(\ref{eq:psiABE}), which promotes any classical distribution to a quantum state, we establish a strong analogy between classical and quantum distillation problems in Table~\ref{tab:Dictionary}.  We show that a classical protocol with an unambiguous initial distribution can be ``lifted'' to a quantum protocol with an unambiguous initial state, without decreasing the associated distillation rate.

\begin{theorem}\label{thm:Main}
Let $P_{ABE}$ be an unambiguous probability distribution and $\ket{\psi}_{ABE}$ be the associated quantum state. The distillable entanglement of $\ket{\psi}_{ABE}$ is at least as big as the distillable key by public discussion of $P_{ABE}$,
\begin{equation}
  D(\psi_{ABE}) \geq K_{\rm PD}(P_{ABE}).
\end{equation}
The distillable key of $\ket{\psi}_{ABE}$ is at least as big as the distillable key by public discussion and noisy processing of $P_{ABE}$\fnote{This was first observed in~\cite{CEHHOR07} (see supplementary material for more details).},
\begin{equation}
  K(\psi_{ABE}) \geq K(P_{ABE}).
\end{equation}
\end{theorem}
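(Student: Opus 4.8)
The plan is to establish both inequalities by a uniform ``lifting'' argument: take an optimal classical key-distillation protocol for $P_{ABE}$ and simulate it coherently on the purification $\ket{\psi}_{ABE}$, showing that the quantum protocol produces at least as much of the corresponding quantum resource (EPR pairs or private key) as the classical protocol produced of key. The unambiguous structure (\ref{eq:OliveA}--\ref{eq:OliveE}) is what makes this work: because any two parties determine the third's variable, the pure state $\ket{\psi}_{ABE}$ behaves almost classically with respect to the standard basis, and the correlations Alice and Bob share can be manipulated reversibly.

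First I would set up the dictionary between a single round of classical public discussion and a single round of LOCC. A classical public-discussion step has one party apply a stochastic map and broadcast a function of their variable. In the ``public trash'' (PD) scenario no information is destroyed, so each such map is a permutation (possibly on an enlarged alphabet obtained by appending fresh randomness), and hence lifts to a \emph{unitary} on the corresponding quantum register together with a fresh ancilla; broadcasting a function $f$ of a register lifts to coherently copying $f$ into a new register that is then measured in the standard basis and announced. The key point, using unambiguity, is that after such a measured-and-announced step the global state remains of the form $\sqrt{Q_{ABE}}$ for an updated unambiguous distribution $Q$ (conditioned on the announced value), so the invariant is maintained round by round and at the end Alice and Bob hold a state whose $AB$ reduced density matrix is, up to local isometries, a maximally correlated state of the same size as the final classical key — which in the pure-state setting is exactly a maximally entangled state of that size, giving $D(\psi_{ABE}) \geq K_{\rm PD}(P_{ABE})$. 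For the second inequality, the only new ingredient is noisy processing: a non-reversible stochastic map. I would handle it exactly as a reversible map on a larger system followed by discarding an ancilla; in the quantum simulation we keep that ancilla rather than discard it, i.e. it becomes part of Alice's or Bob's \emph{shield}. Since a quantum private-key (ppt/pbit-style) state is precisely a maximally classically correlated key register tensored with an arbitrary shield held by Alice and Bob, the lifted final state is (up to local isometries) such a private state of key-size equal to the classical key, so $K(\psi_{ABE}) \geq K(P_{ABE})$.

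The steps, in order, would be: (i) reduce an arbitrary PD protocol to one in which every operation is either ``append local randomness'' or ``apply a local permutation and publicly announce a function of the register,'' with the security condition phrased as: the final key register is uniform, identical for Alice and Bob, and product with Eve's entire system (including all announcements and all auxiliary registers); (ii) show by induction on rounds that the coherent simulation keeps the joint state in the ``square-root of an unambiguous distribution'' form, using (\ref{eq:OliveA}--\ref{eq:OliveE}) to verify that measuring-and-announcing does not create coherences that spoil the structure; (iii) translate the classical security condition into the statement that the $ABE$ state at the end is, conditioned on all announcements, of the form $\tfrac{1}{K}\sum_{k}\ket{kk}_{AB}\otimes\ket{\text{junk}_k}_{E}$ with $E$'s part independent of $k$ — hence $AB$ is maximally entangled of rank $K$ — giving the first bound; (iv) for the second bound, redo (i)--(iii) allowing noisy processing, carry the discarded ancillas along as shield systems on Alice's/Bob's side, and invoke the characterization of private states to conclude. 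I would also need the standard ``asymptotic/rate'' bookkeeping: apply the protocol to $n$ i.i.d.\ copies, so the lifted protocol distills $nK_{\rm PD}(P_{ABE})-o(n)$ ebits (resp.\ the analogous key), and take $n\to\infty$; error terms from the classical protocol's vanishing failure probability translate, via e.g.\ a gentle-measurement / fidelity argument, into vanishing trace-distance error in the quantum output.

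The main obstacle I expect is step (ii)--(iii): verifying that the coherent simulation genuinely ends in a \emph{maximally entangled} (not merely classically correlated) state, rather than in the ``classical'' state $\sum_k p_k \proj{kk}_{AB}$ that a naive lift would give. The resolution is that in the purified picture the would-be classical randomness $p_k$ and all the public announcements live on systems that are \emph{purified by Eve}; the classical security condition forces Eve's purifying system to be uncorrelated with the key, which is exactly the condition that makes the $AB$ block pure and maximally entangled (for the $D$ bound) or a genuine private state with shield (for the $K$ bound). Making this rigorous requires being careful that ``announce publicly'' is modeled as ``measure in the standard basis and give the outcome to everyone including Eve,'' and then checking that the unambiguity conditions guarantee these standard-basis measurements are non-disturbing on the relevant reduced states, so no entanglement is destroyed in the simulation. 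A secondary technical point is ensuring that noisy-processing steps, when their ancillas are retained as shields, do not need to be touched again later in the protocol — this is automatic because after a noisy-processing step the protocol only acts on the ``surviving'' register, exactly as in the classical description.
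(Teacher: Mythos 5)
Your setup (steps (i)--(ii)) matches the paper's: the paper likewise shows that public discussion preserves unambiguity (its Lemma~\ref{Lemma:Unambiguity}) and that each announcement lifts to an LOCC operation via Kraus operators $A_m = \sum_a \sqrt{q(m|a)}\proj{a}$ (its Lemma~\ref{Lemma:LOCC-PD}), so the state stays of the form $\sqrt{Q_{ABE}}$ for an unambiguous $Q$. The gap is in step (iii), and it is the central difficulty, not a technicality. The final EC+PA stage of a classical protocol produces key registers $K_A = f(A^n,M)$, $K_B = g(B^n,M)$ that are (near-)deterministic functions of the standard-basis registers. The coherent lift of such a computation leaves the state on $K_A K_B$ equal to $\sum_k p_k \proj{kk}$ --- a separable, classically correlated state --- because the registers $A^n, B^n$ themselves (together with $E^n M$) purify the key. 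Your proposed resolution, that ``the classical security condition forces Eve's purifying system to be uncorrelated with the key, which makes the $AB$ block pure and maximally entangled,'' cannot be invoked here: a key that is a function of $(A^n,M)$ is never independent of the registers that purify it, so the condition you need is vacuously unsatisfiable by any classical post-processing in the standard basis. No amount of lifting the classical EC+PA step will produce an ebit; something genuinely quantum must replace that step.

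The paper's proof supplies exactly this missing ingredient and is structurally different at the end. It stops the classical protocol \emph{before} EC+PA, at the point where the distribution $P_{A^n\!M,B^n\!M,E^n\!M}$ has advantage $A = I(A;B)-I(A;E) > n(R-\delta)$, and then switches to a quantum primitive: the hashing protocol achieving the coherent information. The key identity (Lemma~\ref{Lemma:CoherentInformation}) is that for unambiguous distributions
\begin{equation*}
  I(A\rangle B)_{\rho_{AB}} \;=\; S(B)-S(E) \;=\; H(B)-H(E) \;=\; I(A;B)-I(A;E),
\end{equation*}
which holds because Eqs.~(\ref{eq:OliveB}) and~(\ref{eq:OliveE}) force $H(B|A)=H(E|A)$ and make $\rho_B,\rho_E$ diagonal. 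Thus the quantum protocol distills entanglement at exactly the rate the classical protocol would have distilled key, without ever arguing that a lifted classical key is entangled. Your proposal contains no analogue of this identity, and without it the first inequality does not go through. (For the second inequality your route would work but is more elaborate than necessary: since Eq.~(\ref{eq:OliveE}) makes Eve's conditional state given $(a,b)$ a single basis vector $\ket{e(a,b)}$, Alice and Bob can simply copy their registers in the standard basis and run the classical protocol verbatim, which is what the paper does.)
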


\begin{proof}[Proof sketch.]
Let $\sqrt{Q_{ABE}}$ be the quantum state associated to distribution $Q_{ABE}$ at some step of the classical protocol, and let $\sqrt{M}$ denote the entry-wise square root of the stochastic map $M$ that is applied.  Without loss of generality, $M$ introduces a new random variable.  Hence, if $Q_{ABE}$ is unambiguous then so is  $M \cdot Q_{ABE}$.  By induction, the distribution remains unambiguous throughout the protocol.  Furthermore, at every step $\sqrt{M} \cdot \sqrt{Q_{ABE}} = \sqrt{M \cdot Q_{ABE}}$, which allows lifting the classical protocol to a quantum one.  The quantum protocol achieves the same rate due to properties of unambiguous states (see Appendix~\refx{apx:Correspondence} for complete proof).
\end{proof}

Recall that private bound-entangled states have $D(\psi_{ABE}) = 0$ and $K(\psi_{ABE}) > 0$, implying that entanglement and private key are distinct resources in the quantum world. We show that $K_{\rm PD}$ and $K$ also correspond to distinct resources classically.

\begin{theorem}\label{thm:cPBE}
Unambiguous probability distributions $P_{ABE}$ with $K_{\rm PD}(P_{ABE}) = 0$ and $K(P_{ABE}) > 0$ exist.
\end{theorem}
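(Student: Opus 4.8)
The plan is to exhibit a single unambiguous distribution $P_{ABE}$ that does both jobs, getting $K_{\rm PD}(P_{ABE}) = 0$ almost for free from Theorem~\ref{thm:Main}. Since that theorem gives $K_{\rm PD}(P_{ABE}) \le D(\psi_{ABE})$ and key rates are never negative, it suffices to arrange $D(\psi_{ABE}) = 0$ while keeping $K(P_{ABE}) > 0$. For the first half I would force $\rho_{AB} = \tr_E \proj{\psi}_{ABE}$ to be PT-invariant: Appendix~\ref{apx:PT-invariance} characterises exactly which unambiguous distributions do this (the diagram of Fig.~\ref{fig:Olives} must be a union of crosses, plus a product condition on the probabilities within each cross), and since $\ket{\psi}_{ABE}$ is pure, distilling EPR pairs between Alice and Bob with $E$ in Eve's hands is the same as distilling from $\rho_{AB}$; a PT-invariant $\rho_{AB}$ is PPT and hence not distillable~\cite{Bound}, so $D(\psi_{ABE}) = 0$. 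The problem thus reduces to building a PT-invariant unambiguous distribution with strictly positive distillable key $K(P_{ABE})$.

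Proving $K(P_{ABE}) > 0$ is the real content, and the PT-invariance just imposed actively works against the easy approaches. If the one-way Csisz\'ar--K\"orner rate $I(A;B) - \min\{I(A;E),\, I(B;E)\}$ were positive, ordinary reconciliation followed by privacy amplification---which needs only public discussion and no noisy processing---would already give $K_{\rm PD}(P_{ABE}) > 0$, contradicting $D(\psi_{ABE}) = 0$. So the construction is necessarily tuned so that all one-way rates vanish, and key can be extracted only by a genuinely two-way protocol, using the public discussion and noisy processing permitted in the definition of $K$. Concretely I would take a small explicit example---say $d_A = d_B = 3$, $d_E = 4$ with the cross pattern of Fig.~\ref{fig:Olives} and probabilities satisfying the per-cross product equalities from the appendix---have Alice and Bob publicly announce just enough to collapse their data to a pair of correlated bits, and then run advantage distillation (Maurer's parity-check amplification), possibly preceded by noisy preprocessing on Alice's bit. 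One then shows that, conditioned on Alice and Bob agreeing, Eve's probability of guessing the surviving bit stays bounded away from $1$---precisely the regime in which advantage distillation converges to a positive rate---yielding an explicit positive lower bound on $K(P_{ABE})$.

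The main obstacle is this last positivity estimate. Unlike $D$ and $K_{\rm PD}$, the quantity $K$ has no single-letter formula, so one cannot avoid analysing a specific protocol on a specific distribution; and the parameters must be threaded through a narrow window: simultaneously (i) PT-invariant---a finite system of polynomial equalities on the $p(a,b,e)$ that the union-of-crosses structure makes solvable---and (ii) such that the advantage-distillation or noisy-preprocessing recursion provably gains. Establishing (ii) in closed form, or, failing that, verifying it for the chosen small instances, is where the effort goes; the remaining ingredients---unambiguity being preserved under the relevant maps, the reduction through Theorem~\ref{thm:Main}, and PPT implying zero distillable entanglement---are routine.
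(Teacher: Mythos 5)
Your first half is exactly the paper's route: choose the diagram of $P_{ABE}$ to be a union of crosses, impose the rank-one condition of Lemma~\ref{Lem:PT} on each cross so that $\rho_{AB}$ is PT-invariant, conclude $D(\psi_{ABE})=0$ from PPT-ness, and get $K_{\rm PD}(P_{ABE})=0$ from Theorem~\ref{thm:Main}. The gap is in the second half, and it is twofold. First, your inference that ``key can be extracted only by a genuinely two-way protocol'' does not follow and is in fact false for the paper's examples. What vanishes is the one-way rate \emph{without} preprocessing, $I(A;B)-I(A;E)$, because plain \ECPA{} is public discussion and is killed by $K_{\rm PD}=0$. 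The full Csisz\'ar--K\"orner one-way rate allows a preprocessing channel $A\to X$, and that is precisely the noisy processing that escapes the bound: the paper's witness for $K(P_{ABE})>0$ is simply $I(X;B)-I(X;E)$ for a suitable stochastic map $Q_{X|A}$, followed by ordinary one-way \ECPA{} from Alice to Bob --- no advantage distillation, no two-way interaction. Your proposed protocol (collapse to bits, then Maurer-style advantage distillation ``possibly preceded by noisy preprocessing'') is both harder to analyse and internally shaky: without the noisy preprocessing it is pure public discussion and provably yields zero by the very theorem you invoke, so the ``possibly'' must be deleted; and your convergence criterion (Eve's guessing probability bounded away from $1$ after post-selection) is not the right condition for advantage distillation to gain --- one needs a quantitative comparison between Alice--Bob's disagreement rate and Eve's distinguishing advantage as the block length grows.

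Second, and more fundamentally, the positivity itself --- the entire content of the theorem --- is never established; you flag it as ``the main obstacle'' and ``where the effort goes'' but do not close it. The paper closes it by exhibiting explicit distributions (e.g.\ the $3\times3$, $d_E=4$ example of Fig.~\ref{fig:Olives}) together with explicit maps $Q_{X|A}$, and evaluating $I(X;B)-I(X;E)$ numerically ($\approx 0.0058$ bits in the $3\times3$ case); Appendix~\refx{apx:Examples} even gives a closed-form objective $f(a,b,c,d,e)$ for the $4\times5$ instance with a rational point achieving $\approx 0.035$. Without some such explicit evaluation your argument establishes only the easy inequality $K_{\rm PD}\le D=0$ and leaves the existence claim unproved.
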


\begin{proof}[Proof sketch.]
To guarantee $K_{\rm PD}(P_{ABE}) = 0$, we choose an unambiguous $P_{ABE}$ corresponding to a PT-invariant $\rho_{AB}$. Then $D(\rho_{AB}) = 0$ and $K_{\rm PD}$ vanishes by Theorem~\ref{thm:Main}.  We obtain a positive value of $K(P_{ABE})$ by cleverly choosing the diagram of $P_{ABE}$ (see Fig.~\ref{fig:Olives}) and numerically optimizing the right-hand side of
\begin{equation}
  K(P_{ABE}) \geq I(X;B) - I(X;E),
\end{equation}
where $I(X;B)$ denotes the mutual information\fnote{The mutual information between classical random variables $X$ and $B$ is defined as $I(X;B) := H(X) + H(B) - H(XB)$, where $H$ is the entropy function given by \unexpanded{$H(A) := - \sum_a p(a) \log p(a)$}.} between classical random variables $X$ and $B$, and $X$ is obtained by noisy processing of $A$.  Table~\ref{tab:Summary} summarizes our findings for various small dimensions, and Fig.~\ref{fig:Olives} shows the structure of our smallest example, a $3 \times 3$ state.  More details and explicit examples are provided in Appendix~\refx{apx:Examples}.
\end{proof}

\begin{table}[ht]
\begin{center}
\begin{tabular}{|c|r|c|}
  \hline
  $d_A \times d_B$ & $d_E$ & Bits of private key \\
  \hline
  $3 \times 3$ &  $4$ & \csuse{3x3} \\
  $4 \times 4$ &  $6$ & \csuse{4x4} \\
  $4 \times 5$ &  $8$ & \csuse{4x5} \\
  $5 \times 6$ & $10$ & \csuse{5x6} \\
  $6 \times 5$ & $10$ & \csuse{6x5} \\
  \hline
\end{tabular}
\caption{Summary of private bound-entangled states obtained using our construction. Here $d_A$, $d_B$, and $d_E$ are the dimensions of Alice, Bob and Eve. The third column is a numerical lower bound on the amount of distillable private key. The amount of private key in our $4 \times 4$ example exceeds \csuse{4H4} achieved by~\cite{SmallBoundKey}. Our $4 \times 5$ example can be embedded in the $5 \times 6$ and $6 \times 5$ examples, but we report only states that are not trivially reducible to examples in smaller dimensions. This is why the last two examples have smaller key rates despite having larger dimensions.}
\label{tab:Summary}
\end{center}
\end{table}

Since our construction guarantees $D(\psi_{ABE}) = 0$, we can lift any $P_{ABE}$ from Theorem~\ref{thm:cPBE} to a private bound-entangled state $\ket{\psi}_{ABE}$ by applying Theorem~\ref{thm:Main}.

\begin{corollary}\label{cor:PBE}
Any $P_{ABE}$ from Theorem~\ref{thm:cPBE} yields a private bound-entangled state $\ket{\psi}_{ABE}$ via Eq.~(\ref{eq:psiABE}).
\end{corollary}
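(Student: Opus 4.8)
The plan is to obtain the two defining properties of a private bound-entangled state, $D(\psi_{ABE}) = 0$ and $K(\psi_{ABE}) > 0$, from ingredients already in place. The positive key rate is immediate: Theorem~\ref{thm:Main} gives $K(\psi_{ABE}) \geq K(P_{ABE})$, and any $P_{ABE}$ supplied by Theorem~\ref{thm:cPBE} has $K(P_{ABE}) > 0$ by construction, so $K(\psi_{ABE}) > 0$.

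For vanishing distillable entanglement I would use the fact that the distributions produced in the proof of Theorem~\ref{thm:cPBE} are chosen so that $\rho_{AB}$ from Eq.~\eqref{eq:rhoAB} is PT-invariant, $\rho_{AB}\pt = \rho_{AB}$. Such a state is PPT, hence has no distillable entanglement by~\cite{Bound}, so $D(\rho_{AB}) = 0$. Since $\ket{\psi}_{ABE}$ is a purification of $\rho_{AB}$ with the purifying system held entirely by Eve, and Alice and Bob may act only on their own subsystems, LOCC on $\ket{\psi}_{ABE}$ is the same as LOCC on $\rho_{AB}$; therefore $D(\psi_{ABE}) = D(\rho_{AB}) = 0$. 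Combined with the previous paragraph, $\ket{\psi}_{ABE}$ satisfies $D(\psi_{ABE}) = 0$ and $K(\psi_{ABE}) > 0$, which is exactly the definition of a private bound-entangled state.

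I do not expect a genuine obstacle here: each step is either a direct invocation of Theorem~\ref{thm:Main}, the standard undistillability of PPT states, or the routine identity between distillable entanglement of $\ket{\psi}_{ABE}$ in the ``Eve holds the purification'' picture and that of the reduced state $\rho_{AB}$. The one point that needs to be made explicit is that the existence argument behind Theorem~\ref{thm:cPBE} really delivers PT-invariant $\rho_{AB}$ (as its proof sketch asserts, via the characterization in Appendix~\refx{apx:PT-invariance}); this matters because the first inequality of Theorem~\ref{thm:Main}, $D(\psi_{ABE}) \geq K_{\rm PD}(P_{ABE}) = 0$, is vacuous, so the needed upper bound $D(\psi_{ABE}) \leq 0$ must come from PPT-ness rather than from the lifting theorem itself.
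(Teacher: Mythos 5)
Your proposal is correct and follows essentially the same route as the paper: the paper likewise obtains $K(\psi_{ABE})>0$ from the second inequality of Theorem~\ref{thm:Main} together with $K(P_{ABE})>0$, and obtains $D(\psi_{ABE})=0$ from the PT-invariance of $\rho_{AB}$ built into the construction of Theorem~\ref{thm:cPBE} (via Lemma~\ref{Lem:PT} and the undistillability of PPT states), not from the lifting inequality. Your explicit remark that $D(\psi_{ABE})\geq K_{\rm PD}(P_{ABE})=0$ is vacuous as an upper bound, and that the identification $D(\psi_{ABE})=D(\rho_{AB})$ is what is really being used, correctly spells out what the paper leaves implicit.
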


This gives a new construction of private bound-entangled states (the only known construction before our work was~\cite{BoundKey, SmallBoundKey}).  In fact, due to the lifting established by Theorem~\ref{thm:Main}, it is natural to consider the distribution $P_{ABE}$ in Theorem~\ref{thm:cPBE} as a classical analogue of private bound entanglement.  This provides a satisfactory resolution to the problem of finding a classical analogue of bound entanglement~\cite{GRW02, RW03, ACM04, PA12}.

\section{Implications for classical key agreement} \label{sect:Implications}

The basic technique for classical key agreement is a combination of
Error Correction and Privacy Amplification (\ECPA{}), which achieves a
rate of the mutual information difference $I(A;B)-I(A;E)$~\cite{Wyner75}.  
Essentially all other protocols use \ECPA{}
as a final step.  For example, preceding \ECPA{} by a noisy processing
step in which the distribution of $A$ is modified gives the optimal
key rate for distillation with one-way discussion from Alice to
Bob~\cite{CK78}.  Similarly, Maurer considered public discussion
protocols where Alice and Bob exchange the information about their
variables in a two-way fashion~\cite{Maurer}. Public discussion
includes as special cases post-selection and reverse reconciliation,
but does not include noisy processing.  Maurer showed that two-way
public discussion can be strictly stronger than one-way.  He also
suggested that in the two-way setting noisy processing might give no
benefit~\cite{Maurer}. Evidence suggesting the opposite later was
given in~\cite{AGM06}.

By considering the classical unambiguous probability distributions that yield private bound-entangled states, we find that in general public discussion alone is insufficient for optimal key extraction even in the two-way setting.  Stronger still, while no key can be distilled using only public discussion, a positive rate is achieved by noisy processing and one-way discussion.

\section{Conclusions}

We have concentrated on the analogy between quantum entanglement
distillation and classical key distillation using only public
discussion, and abandoned for now the search for bound information,
which remains an important open question.  This led us to observe the
dual nature of unambiguous distributions and quantum states, which in turn suggested a proof
that noisy processing is necessary for two-way key distillation.
While this finding concerns a purely classical question, reaching this
conclusion appears to require a detour through quantum mechanics---we
know of no classical proof. This suggests an exciting possibility of
using quantum means to solve other questions in classical cryptography
and information theory.

Along the way we found a new construction of private bound-entangled states.  The standard construction involves two systems for each party: a ``key'' system yielding private correlations upon measurement, and a ``shield'' system that weakens Eve's correlation with the key~\cite{BoundKey, SmallBoundKey}.  Our construction does not employ the key/shield distinction.  Instead, we first construct a classical unambiguous probability distribution and promote it to a private bound-entangled quantum state.  This gives an example in $3 \times 3$ dimensions, which is too small to accommodate key and shield subsystems.  We also find an example in $4 \times 4$ with more key than that of~\cite{SmallBoundKey}, and further examples in other dimensions.  Of course, though our constructions do not have a clear key/shield separation, a protocol that distills key from many copies of our states produces trash that cannot be safely handed over to Eve (the state is bound-entangled after all).  This trash can then be identified as the shield of the purified key.

Bound-entangled states are not just a curious mathematical construction---their existence has been verified experimentally~\cite{Mysterious, AB09, LKPR10, BSGM10, KBPS10, DSHP11, DKDB11, DKDB13, ASB13}.  The Smolin state was prepared using polarized photons~\cite{AB09, LKPR10, DKDB13, ASB13} and trapped ions~\cite{BSGM10}.  A pseudo-bound-entangled state was created using nuclear magnetic resonance~\cite{KBPS10}.  A continuous-variable bound-entangled state of light was prepared by~\cite{DSHP11}.  Finally, states with more distillable key than entanglement have been prepared~\cite{DKDB11, DKDB13}, however they are not bound.

So far no experiment has demonstrated a \emph{private} bound-entangled state.  The simplest known such state is given by our construction (Fig.~\ref{fig:Olives}).  It can be prepared by randomly sampling four pure entangled two-qutrit states (three have Schmidt-rank 2 and one has Schmidt-rank 3). Furthermore, their amplitudes are real, so each individual state can be prepared by performing rotations around a single axis in the two-dimensional subspace spanned by $\ket{00}_{AB}$ and $\ket{11}_{AB}$, and permuting the standard basis vectors $\ket{0}$, $\ket{1}$, $\ket{2}$ of each qutrit.

Our work may facilitate an experimental demonstration of superactivation---a phenomenon wherein pairs of quantum channels, neither of which can transmit quantum information on its own, nevertheless have positive capacity when used together~\cite{Super}.  Channels with zero quantum capacity but positive private classical capacity are central to the phenomenon, and these can easily be constructed from our private bound-entangled states.  Indeed, our $3 \times 3$ state gives rise to a zero-capacity channel acting on a single qutrit that can be superactivated by a $50\%$ erasure channel with $4$-dimensional input, the smallest known example.

\section*{Acknowledgments}

We acknowledge Charles Bennett and Debbie Leung for commenting on an earlier version of this manuscript.  We also thank the anonymous referees for bringing reference~\cite{CEHHOR07} to our attention and for other useful suggestions.  This work was supported by DARPA QUEST program under contract number HR0011-09-C-0047.

\bibliographystyle{unsrturl}
\bibliography{PRL}


\onecolumngrid
\vspace{0.2in}


\appendix
\numberwithin{equation}{section}
\numberwithin{figure}{section}

\section{Distillation with remanent devices} \label{apx:Remanence}

Our main contribution is to clarify the distinction between quantum entanglement and key distillation, and to exhibit similar  classical phenomena.  We believe that this distinction is best explained using a model of \emph{remanent devices}, since it allows to use the same language quantumly as well as classically.

Consider the following problem. Let $\ket{\psi}_{ABE}$ be a tripartite state shared between Alice, Bob, and Eve. The goal of Alice and Bob is to use public classical communication to distill a key from $\ket{\psi}_{ABE}$ that is secure from Eve. Moreover, assume the distillation is performed on devices that are susceptible to \emph{data remanence}---that is, when Alice and Bob are done, they take their keys with them, but all other information left on devices (erased or not) becomes available to Eve.

To analyze such distillation protocols, we assume (without loss of generality) that all measurements are deferred till the end of the protocol and at each step a unitary isometry is applied (see Appendix~\ref{apx:Noise} for a classical equivalent of this claim). Let $\ket{\Psi}_{A B T_A T_B E}$ denote the state after the last isometry (see Fig.~\ref{fig:Trash}), where $T_A$ and $T_B$ are the ``trash systems'' of Alice and Bob that are discarded in the final step.

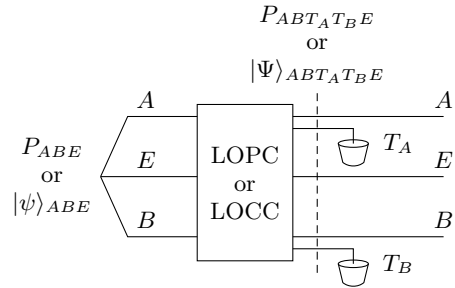
\begin{figure}[!h]


\begin{tikzpicture}

\def\w{1.2cm}; 
\def\h{0.8cm}; 

\path (-0.3*\w,0) coordinate (P);
\node [anchor = east, align = center] at (P) {$P_{ABE}$ \\ or \\ $\ket{\psi}_{ABE}$};

\foreach \n/\i in {B/-1, E/0, A/1} {
  \path (     0,\i*\h) coordinate [label=30:$\n$] (\n);
  \path (3.5*\w,\i*\h) coordinate [label=90:$\n$] (\n');
  \draw (P) -- (\n) -- (\n');
}

\def\rx{0.20}; 
\def\ry{0.05}; 
\def\th{0.25}; 

\foreach \n in {A, B} {
  \path ( \n)+(1.3*\w,-0.2*\h) coordinate (R\n);
  \path (R\n)+(1.2*\w,0) coordinate (T\n);
  \path (T\n)+(0,-0.2) coordinate (t\n);
  \draw (R\n) -- (T\n) -- (t\n);
  \draw (t\n) circle [x radius = \rx, y radius = \ry];
  \path [draw = black] (t\n) ++(\rx,0) -- ++(-0.3*\rx,-\th) .. controls +(220:0.1) and +(-40:0.1) .. ++(-1.4*\rx,0) -- ++(-0.3*\rx,\th);
  \path (t\n) + (3*\rx,0) node {$T_\n$};
}

\path (E)+(1.3*\w,-0.1*\h) node [draw = black, align = center, fill = white, text width = 30, minimum height = 2.6*\h] {LOPC \\ or \\ LOCC};

\draw [densely dashed] (2.1*\w,-1.6*\h) -- (2.1*\w,1.4*\h);
\node [text width = 70, align = center] at (2.1*\w,2.2*\h) {$P_{A B T_A T_B E}$ \\ or \\ $\ket{\Psi}_{A B T_A T_B E}$};

\end{tikzpicture}

\caption{\label{fig:Trash}Any classical or quantum distillation protocol can be cast in the above form. The initial resource is a tripartite probability distribution $P_{ABE}$ or a quantum state $\ket{\psi}_{ABE}$. Alice and Bob perform a sequence of local operations and public communication (LOPC) in the classical case or local operations and classical communication (LOCC) in the quantum case. At each step, either Alice or Bob generates a classical message register that is attached to their system and the systems of the other two parties (for more details see Fig.~\ref{fig:Distillation} in Appendix~\ref{apx:Noise}). Registers $T_A$ and $T_B$ are the trash systems of Alice and Bob which (depending on the setting) may be accessible to Eve.}
\end{figure}

To illustrate the distinction between entanglement and secret key, it is instructive to consider the following two examples, where for simplicity we assume that Eve is decoupled from the other parties and we denote her reduced state by $\ket{\phi}_E$. 

\begin{enumerate}
\item Assume that
\begin{equation}
 \ket{\Psi}_{A B T_A T_B E} =
  \frac{1}{\sqrt{2}} \Bigl(
    \ket{00}_{AB} \ket{\psi_0}_{T_A T_B} +
    \ket{11}_{AB} \ket{\psi_1}_{T_A T_B}
  \Bigr) \ket{\phi}_E,
\end{equation}
where $\ket{\psi_0}_{T_A T_B} \perp \ket{\psi_1}_{T_A T_B}$ are arbitrary states on the trash systems $T_A$ and $T_B$. Alice and Bob might attempt to obtain a shared private bit by discarding $T_A$ and $T_B$. However, its privacy would depend on the assumption that Eve has no access to the discarded systems. If this cannot be guaranteed, the bit is compromised, as Eve can recover it by performing a measurement that perfectly discriminates the orthogonal states $\ket{\psi_0}_{T_A T_B}$ and $\ket{\psi_1}_{T_A T_B}$.
\item On the other hand, if
\begin{equation}
  \ket{\Psi}_{A B T_A T_B E} =
  \frac{1}{\sqrt{2}} \Bigl(
    \ket{00}_{AB} +
    \ket{11}_{AB}
  \Bigr) \ket{\psi}_{T_A T_B} \ket{\phi}_E
\end{equation}
for some arbitrary state $\ket{\psi}_{T_A T_B}$, Eve can learn nothing about the reduced state on $AB$ even when she possesses the trash systems $T_A$ and $T_B$.
\end{enumerate}

In the first case, the state contains a private key as long as Alice and Bob can keep their trash systems $T_A$ and $T_B$ private.  However, if at the end of the protocol Eve can access the remanent data on their devices, she can easily recover the key.  In the second case, the key remains secure even if Eve can access the remanent data.  Note that when the goal is to distill entanglement, allowing Alice and Bob to keep their trash systems makes no difference.

The above way of explaining the distinction between entanglement and key distillation translates in a straightforward way to the classical case.  Appendix~\ref{apx:Noise} explains a classical equivalent of deferring measurements till the end of the protocol.

\section{PT-invariance} \label{apx:PT-invariance}

In this appendix, we describe the PT-invariance condition of $\rho_{AB}$ in terms of the underlying unambiguous distribution $P_{ABE}$. Recall from Eqs.~(\refx{eq:OliveA}) to~(\refx{eq:OliveE}) that $P_{ABE}$ is unambiguous if any two parties can together recover the value of the third party's variable. For example, if Alice has $a$ and Bob has $b$, then Eve's
value is
\begin{equation}
  e(a,b) :=
  \begin{cases}
    e     & \text{if $p(a,b,e) \neq 0$}, \\
    \yuzz & \text{otherwise},
  \end{cases}
  \label{eq:e(a,b)}
\end{equation}
where $\yuzz$ (\emph{yuzz}) is a special symbol that lies outside of Eve's alphabet~\cite{DrSeuss} and indicates that Alice and Bob never have the pair $(a,b)$. Notice that the reduced distribution on Alice and Bob is given by
\begin{equation}
  p(a,b) := \sum_e p(a,b,e) =
  \begin{cases}
    0             & \text{if $e(a,b) = \yuzz$}, \\
    p(a,b,e(a,b)) & \text{otherwise}.
  \end{cases}
  \label{eq:p(a,b)}
\end{equation}

Recall from Eq.~(\refx{eq:PT}) that a bipartite state $\rho_{AB}$ is PT-invariant if $\rho_{AB}\pt = \rho_{AB}$, where the partial transposition is defined on the standard basis as
\begin{equation}
  \bigl( \ketbra{a}{a'}_A \otimes \ketbra{b}{b'}_B \bigr)\pt
  := \ketbra{a}{a'}_A \otimes \ketbra{b'}{b}_B
  \label{eq:PTx}
\end{equation}
and extended by linearity. The following lemma relates PT-invariance of $\rho_{AB}$ to two properties of the underlying unambiguous distribution $P_{ABE}$. The first property says that the diagram associated to $P_{ABE}$ can be obtained by superimposing several crosses (see Fig.~\refx{fig:Olives}), and the second property says that each $2 \times 2$ submatrix corresponding to a cross has rank one.  For example, if $P_{ABE}$ has the diagram shown in Fig.~\refx{fig:Olives}, then the entries of $P_{AB}$ must satisfy
\begin{equation*}
  \det \mx{p_{00} & p_{01} \\ p_{10} & p_{11}} =
  \det \mx{p_{11} & p_{12} \\ p_{21} & p_{22}} =
  \det \mx{p_{00} & p_{02} \\ p_{20} & p_{22}} = 0.
\end{equation*}

\begin{lemma}\label{Lem:PT}
Let $P_{ABE}$ be an unambiguous\footnote{The result is in fact slightly more general, since we only\\need to assume Eq.~(\refx{eq:OliveE}) for the proof to go through.\\} probability distribution. Then the following condition on $P_{ABE}$ is equivalent to $\rho_{AB}$ being PT-invariant: if $e(a,b) = e(a',b') \neq \yuzz$ for some $a \neq a'$ and $b \neq b'$ then
\begin{enumerate}
  \item $e(a,b') = e(a',b) \neq \yuzz$ and
  \item $p(a,b) p(a',b') = p(a,b') p(a',b)$,
\end{enumerate}
where $e(a,b)$ and $p(a,b)$ are defined in Eqs.~(\ref{eq:e(a,b)}) and~(\ref{eq:p(a,b)}), respectively.
\end{lemma}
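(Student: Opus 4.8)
The plan is to write $\rho_{AB}$ explicitly in the standard basis, apply the definition of partial transpose in Eq.~\eqref{eq:PTx}, and compare matrix entries on both sides. From Eqs.~\eqref{eq:psiABE} and~\eqref{eq:rhoAB}, tracing out $E$ gives
\begin{equation*}
  \rho_{AB} = \sum_{e} \Bigl( \sum_{a,b : p(a,b,e)\neq 0} \sqrt{p(a,b,e)}\,\ket{a}_A\ket{b}_B \Bigr)\Bigl( \sum_{a',b' : p(a',b',e)\neq 0} \sqrt{p(a',b',e)}\,\bra{a'}_A\bra{b'}_B \Bigr),
\end{equation*}
so the $(ab,a'b')$ matrix element is $\sqrt{p(a,b,e)\,p(a',b',e)}$ when both triples carry the same Eve symbol $e$, and $0$ otherwise. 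Using the unambiguity abbreviation $e(a,b)$ from Eq.~\eqref{eq:e(a,b)} and $p(a,b)$ from Eq.~\eqref{eq:p(a,b)}, this says: $(\rho_{AB})_{ab,a'b'} = \sqrt{p(a,b)\,p(a',b')}$ if $e(a,b)=e(a',b')\neq\yuzz$, and $0$ otherwise. By Eq.~\eqref{eq:PTx}, the partial transpose sends the $(ab,a'b')$ entry to the $(ab',a'b)$ entry, so $(\rho_{AB}\pt)_{ab,a'b'} = (\rho_{AB})_{ab',a'b}$, which is nonzero iff $e(a,b')=e(a',b)\neq\yuzz$, in which case it equals $\sqrt{p(a,b')\,p(a',b)}$.

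First I would dispose of the diagonal-block cases $a=a'$ or $b=b'$: there PT acts trivially on the relevant index, so the entry is automatically fixed. Then the equivalence reduces to the off-diagonal case $a\neq a'$, $b\neq b'$, where we must have: $(\rho_{AB})_{ab,a'b'}$ is nonzero exactly when $(\rho_{AB}\pt)_{ab,a'b'}$ is, \emph{and} the two values agree. The ``support'' part gives condition~1: $e(a,b)=e(a',b')\neq\yuzz \iff e(a,b')=e(a',b)\neq\yuzz$ — and since swapping the roles of $b,b'$ turns one side into the other, one implication suffices, matching the one-directional statement in the lemma. Given that both sets of triples share a common Eve symbol, the ``value'' part gives $\sqrt{p(a,b)\,p(a',b')}=\sqrt{p(a,b')\,p(a',b)}$, i.e. condition~2 after squaring (all probabilities are nonnegative, so squaring loses nothing). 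Conversely, assuming conditions~1 and~2 for every such quadruple, every matrix element of $\rho_{AB}\pt$ matches that of $\rho_{AB}$, giving PT-invariance.

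The only mildly delicate point — and the step I would be most careful about — is the bookkeeping that Eq.~\eqref{eq:OliveE} alone (the footnote's weaker hypothesis) is enough: it guarantees each pair $(a,b)$ has at most one Eve symbol, so $e(a,b)$ and the entry $\sqrt{p(a,b)}$ are well-defined and the matrix of $\rho_{AB}$ is as claimed; Eqs.~\eqref{eq:OliveA}–\eqref{eq:OliveB} are not needed for the argument. I would also note that when $e(a,b)=e(a',b')\neq\yuzz$ but (say) $e(a,b')=\yuzz$, condition~1 already fails, so there is no separate ``value'' condition to check in that case — the two conditions are correctly nested. No real obstacle here; the content is entirely in identifying the matrix of $\rho_{AB}$ correctly and reading off what Eq.~\eqref{eq:PTx} demands.
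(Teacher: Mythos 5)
Your proposal is correct and follows essentially the same route as the paper's proof: both expand $\rho_{AB}$ entry-wise from Eqs.~(\ref{eq:psiABE}) and~(\ref{eq:rhoAB}), apply the partial transpose of Eq.~(\ref{eq:PTx}), and use Eq.~(\ref{eq:OliveE}) to reduce the sum over $e$ to at most one nonzero term, so that equality of supports yields condition~1 and equality of values yields condition~2. The extra bookkeeping you flag (the diagonal blocks $a=a'$ or $b=b'$, and the fact that only Eq.~(\ref{eq:OliveE}) is needed) matches what the paper handles implicitly and in its footnote.
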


\begin{proof}
We expand $\rho_{AB}$ using Eqs.~(\refx{eq:rhoAB}) and~(\refx{eq:psiABE}) and compute the partial transpose according to Eq.~\eqref{eq:PTx}:
\begin{align}
   \bigl( \rho_{AB} \bigr)\pt
&= \bigl( \tr_E \proj{\psi}_{ABE} \bigr)\pt \\
&= \sum_e \Biggl( \sum_{a,a',b,b'} \sqrt{p(a,b,e) p(a',b',e)}
   \ketbra{a}{a'}_A \otimes \ketbra{b}{b'}_B \Biggr)\pt \\
&= \sum_e \Biggl( \sum_{a,a',b,b'} \sqrt{p(a,b',e) p(a',b,e)}
   \ketbra{a}{a'}_A \otimes \ketbra{b}{b'}_B \Biggr),
\end{align}
where we relabeled $b$ and $b'$. This is equal to $\rho_{AB}$ if and only if
\begin{equation}
  \forall a,a',b,b':
  \sum_e \sqrt{p(a,b ,e) p(a',b',e)} =
  \sum_e \sqrt{p(a,b',e) p(a',b ,e)}.
\end{equation}
Since $P_{ABE}$ is unambiguous, Eq.~(\refx{eq:OliveE}) implies that each of the two sums contains at most one nonzero term. Moreover, both sides are nonzero exactly when the first condition holds, and equal exactly when the second condition holds.
\end{proof}

\section{Classical-quantum correspondence} \label{apx:Correspondence}

The main reason for introducing unambiguous probability distributions is Theorem~\ref{thm:Mainx} that establishes a relationship between the rate $K_{\rm PD}(P_{ABE})$ of private key that can be distilled from an unambiguous probability distribution $P_{ABE}$ by public discussion, and the distillable entanglement $D(\psi_{ABE})$ of the quantum version $\ket{\psi}_{ABE}$ of that distribution.  The proof of this theorem follows from several lemmas.

\begin{lemma}\label{Lemma:Unambiguity}
Let $P_{ABE}$ be an unambiguous distribution, and suppose $P_{AM,BM,EM}$ can be generated from $P_{ABE}$ by public discussion where $M$ is the public message.  Then the probability distribution $P_{AM,BM,EM}$ is also unambiguous.\footnote{More generally, if $P_{ABE}$ satisfies only some subset of Eqs.~(\refx{eq:OliveA}--\refx{eq:OliveE}), then the same equations are satisfied also by $P_{AM,BM,EM}$.\\}
\end{lemma}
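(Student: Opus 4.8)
The plan is to reduce the lemma to a single round of public discussion and then iterate. In one round a single party---say Alice, without loss of generality---broadcasts a symbol $m$ drawn from a conditional distribution $p(m\mid a)$ depending only on her current variable $a$, and afterwards the three parties hold $A'=(A,M)$, $B'=(B,M)$, $E'=(E,M)$ with joint distribution $p(a,b,e)\,p(m\mid a)$. A general public-discussion transcript $M=(M_1,\dots,M_k)$ is obtained by composing such rounds, each $M_{i+1}$ being generated by one party from that party's variable together with the prior transcript $M_1\cdots M_i$; so it suffices to prove the single-round claim and then induct on $k$, at each step absorbing the running transcript into the sender's variable.

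First I would record the one structural fact that makes the argument work: a triple $\bigl((a,m_A),(b,m_B),(e,m_E)\bigr)$ can have nonzero probability only if $m_A=m_B=m_E$, because $M$ is a single register copied into all three systems. Hence, when testing any of Eqs.~(\refx{eq:OliveA}--\refx{eq:OliveE}) for the new distribution, fixing two of the three augmented variables either leaves nothing to check (their $M$-components disagree) or pins down a common value $m$, and then the $M$-component of the third augmented variable is forced to equal $m$ as well. The only remaining freedom is in the original-alphabet coordinate of the third variable.

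Next I would verify the three conditions one at a time, each time invoking only the corresponding condition for $P_{ABE}$. For Eq.~(\refx{eq:OliveA}): with $(b,m)$ and $(e,m)$ fixed, the admissible first coordinates lie in $\set{a : p(a,b,e)\neq 0}$, which has at most one element by Eq.~(\refx{eq:OliveA}) for $P_{ABE}$, so there is at most one augmented value $(a,m)$. Eq.~(\refx{eq:OliveB}) is the same with the roles of $a$ and $b$ interchanged---here $p(m\mid a)$ is a fixed scalar that merely rescales the set $\set{b : p(a,b,e)\neq 0}$. For Eq.~(\refx{eq:OliveE}): with $(a,m)$ and $(b,m)$ fixed, the admissible $e$ lie in $\set{e : p(a,b,e)\neq 0}$, again of size at most one. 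Because each equation is checked in isolation, this simultaneously proves the footnote's generalization: whatever subset of Eqs.~(\refx{eq:OliveA}--\refx{eq:OliveE}) the distribution $P_{ABE}$ satisfies is inherited by $P_{AM,BM,EM}$. The case in which Bob sends the message is symmetric, with $p(m\mid b)$ in place of $p(m\mid a)$.

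This argument is elementary and I do not anticipate a genuine obstacle; the only point requiring care is the bookkeeping---treating the broadcast register $M$ as genuinely common to all three augmented variables, which is exactly what keeps appending it harmless, and keeping the induction honest by folding the transcript accumulated so far into the sender's variable before applying the single-round step.
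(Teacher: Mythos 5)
Your proposal is correct and follows essentially the same route as the paper: reduce to a single round by induction (folding the transcript into the sender's variable), write the updated distribution as $p(a,b,e)\,q(m\mid a)$, and check each of Eqs.~(\refx{eq:OliveA}--\refx{eq:OliveE}) separately by fixing two augmented variables and invoking the corresponding condition for $P_{ABE}$. Your extra observation that a nonzero triple forces the $M$-components to agree is implicit in the paper's proof, and your condition-by-condition check yields the footnote's generalization exactly as the paper intends.
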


\begin{proof}
It suffices to consider only 1-round protocols, since the general case follows by induction.  Without loss of generality, let the protocol consist of a message $m$ sent from $A$ to $B$ according to some conditional distribution $q(m|a)$.  The probability distribution $P_{AM,BM,EM}$ is then given by $p[(a,m),(b,m),(e,m)] = p(a,b,e) q(m|a)$.  To check that $P_{AM,BM,EM}$ is unambiguous, we fix $(b,m)$ and $(e,m)$ (or equivalently $b,e,m$) and verify that
\begin{align}
      \abs{\set{(a,m) : p[(a,m),(b,m),(e,m)] \neq 0}}
&=    \abs{\set{(a,m) : p(a,b,e) q(m|a) \neq 0}} \\
&\leq \abs{\set{a : p(a,b,e) \neq 0}} \\
&\leq 1,
\end{align}
which is the first condition in Eq.~(\refx{eq:OliveA}).  Similarly, we find the second two conditions are satisfied.
\end{proof}

\begin{lemma}\label{Lemma:LOCC-PD}
If $P_{AM,BM,EM}$ can be generated by public discussion from $P_{ABE}$, then the corresponding quantum state $\rho_{AM,BM}$ can be generated from $\rho_{AB}$ by LOCC.
\end{lemma}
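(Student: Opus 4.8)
The plan is to reduce to a single round of public discussion and then exhibit an explicit LOCC implementation. Since a concatenation of LOCC protocols is again an LOCC protocol --- and since, as we will see, the state produced after one round is exactly the ``canonical'' pure state $\sqrt{P}$ attached to the updated distribution, with Eve holding its purification --- it suffices to handle a single round and then induct. Using the same normalization as in the proof of Lemma~\ref{Lemma:Unambiguity}, we may assume this round consists of Alice broadcasting a message $m$ generated from her variable $a$ with conditional probability $q(m|a)$, so that $p[(a,m),(b,m),(e,m)] = p(a,b,e)\,q(m|a)$.

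First I would record the quantum state of the new distribution: by Eq.~\eqref{eq:psiABE},
\[ \ket{\psi'}_{AM,BM,EM} = \sqrt{P_{AM,BM,EM}} = \sum_{a,b,e,m}\sqrt{p(a,b,e)\,q(m|a)}\;\ket{a,m}_{AM}\ket{b,m}_{BM}\ket{e,m}_{EM}, \]
so all three parties hold an identical copy of the broadcast symbol $m$. Tracing out $EM$ via Eq.~\eqref{eq:rhoAB} forces the copies of $m$ held by Alice and Bob to agree, and expanding $\tr_{EM}\proj{\psi'}$ with $\rho_{AB} = \tr_E\proj{\psi}$ gives the block-diagonal form
\[ \rho_{AM,BM} = \sum_m \bigl(D_m \otimes I_B\bigr)\,\rho_{AB}\,\bigl(D_m\ct \otimes I_B\bigr)\otimes\proj{m}_{M_A}\otimes\proj{m}_{M_B}, \qquad D_m := \sum_a \sqrt{q(m|a)}\;\proj{a}_A . \]
Because $\sum_m q(m|a) = 1$ for every $a$, we have $\sum_m D_m\ct D_m = I_A$, so $\{D_m\}$ is a legitimate local instrument for Alice.

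The LOCC protocol then writes itself: Alice applies the instrument $\{D_m\}$ to her system $A$ (equivalently, she adjoins a register $M_A$ by the isometry $\ket{a}_A \mapsto \sum_m \sqrt{q(m|a)}\,\ket{a}_A\ket{m}_{M_A}$ and measures $M_A$ in the standard basis), keeps the outcome $m$, and transmits it to Bob, who stores it in a fresh register $M_B$ --- a single round of one-way communication. The resulting state is precisely the $\rho_{AM,BM}$ displayed above: the dephasing intrinsic to classical communication is exactly what produces the diagonal-in-$m$ structure, which matches the fact that in $\ket{\psi'}$ Eve too holds a copy of $m$. A message sent from Bob to Alice is handled by the mirror-image protocol, and reinstating the induction over rounds completes the argument. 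I do not expect a deep obstacle; the only point requiring care is the bookkeeping that identifies the coherently prepared $\sqrt{P_{AM,BM,EM}}$ and its $AB$-marginal with Alice's instrument followed by a public announcement of $m$ --- in particular, the observation that whether or not Eve is also handed a copy of $m$ is immaterial to $\rho_{AM,BM}$, because classical communication dephases $M_A$ while $\ket{\psi'}$ already places a copy of $m$ on Eve's side.
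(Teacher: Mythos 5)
Your proof is correct and follows essentially the same route as the paper's: the paper likewise reduces to a single one-way round and has Alice apply the instrument with Kraus operators $A_m = \sum_a \sqrt{q(m|a)}\,\proj{a}_A$ (your $D_m$), broadcasting the outcome, with the multi-round case handled by iteration. Your version merely spells out the bookkeeping verifying that the resulting block-diagonal state coincides with $\tr_{EM}\proj{\psi'}$, which the paper leaves implicit.
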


\begin{proof}
We begin by proving the result for one-way protocols from Alice to Bob. Let Alice's message $m$ be chosen according to conditional distribution $q(m|a)$. Then the probabilities of $P_{AM,BM,EM}$ are given in terms of $P_{ABE}$ by $p[(a,m),(b,m),(e,m)] = p(a,b,e) q(m|a)$. In the quantum case, $\rho_{AM,BM}$ can be obtained from $\rho_{AB}$ by having Alice perform a POVM with Kraus operators
\begin{equation}
  A_m = \sum_{a} \sqrt{q(m|a)} \proj{a}_A
\end{equation}
and keeping a copy of $m$ as well as sending it to both Bob and Eve.  The multi-round result follows by repeatedly applying this observation.
\end{proof}

Let us state some definitions that are necessary for the next lemma.  The coherent information of a state $\rho_{AB} = \tr_E \proj{\psi}_{ABE}$ is given by $I(A\rangle B)_{\rho_{AB}} := S(B)-S(E)$.  The advantage of a tripartite distribution $P_{ABE}$ is $A(P_{ABE}) := I(A;B) - I(A;E)$.  It is a lower bound for one-way distillation rate from Alice to Bob.

\begin{lemma}\label{Lemma:CoherentInformation}
Let $P_{ABE}$ be an unambiguous classical tripartite distribution.  Then, 
\begin{equation}
  I(A\rangle B)_{\rho_{AB}} = A(P_{ABE}).
\end{equation}
\end{lemma}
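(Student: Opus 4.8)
The plan is to compute both sides directly in terms of the entropies of the classical distribution $P_{ABE}$ and the quantum state $\ket{\psi}_{ABE} = \sqrt{P_{ABE}}$, then check they agree by exploiting unambiguity. First I would write out the right-hand side: since $I(A;B) = H(A) + H(B) - H(AB)$ and $I(A;E) = H(A) + H(E) - H(AE)$, we get $A(P_{ABE}) = H(B) - H(AB) - H(E) + H(AE)$. For the left-hand side, $I(A\rangle B)_{\rho_{AB}} = S(B) - S(E)$, where $S(B)$ and $S(E)$ are the von Neumann entropies of the reduced states $\rho_B = \tr_{AE}\proj{\psi}$ and $\rho_E = \tr_{AB}\proj{\psi}$. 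So it suffices to show that $S(B) = H(B) - H(AB) + (\text{something})$ and $S(E) = H(E) - H(AE) + (\text{same something})$, \emph{i.e.}\ the mismatch terms cancel; concretely I expect $S(B) = H(B)$ and $S(E)$ to equal the entropy of the distribution $p(a,b)$ relabeled by $e(a,b)$, which by unambiguity turns out to be $H(AE) - H(E) + \dots$ — the bookkeeping has to be done carefully.

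The key structural input is the special form of $\rho_{AB}$ noted after Eq.~\eqref{eq:rhoAB}: all eigenvectors of $\rho_{AB}$ share the same Schmidt basis. More usefully, I would compute $\rho_B$ and $\rho_E$ explicitly. Tracing out $A$ and $E$ from $\proj{\psi}$, unambiguity via Eq.~\eqref{eq:OliveB} (for fixed $a,e$ there is at most one $b$) forces the off-diagonal terms $\ketbra{b}{b'}$ with $b \neq b'$ to vanish, so $\rho_B$ is diagonal with entries $p(b) = \sum_{a,e} p(a,b,e) = \sum_a p(a,b)$; hence $S(B) = H(B)$. Similarly, tracing out $A$ and $B$, Eq.~\eqref{eq:OliveE} (for fixed $a,b$ at most one $e$) makes $\rho_E$ diagonal with entries $p(e) = \sum_{a,b} p(a,b,e)$, so $S(E) = H(E)$. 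That gives $I(A\rangle B)_{\rho_{AB}} = H(B) - H(E)$, and it remains to show $H(B) - H(E) = H(B) - H(AB) - H(E) + H(AE)$, \emph{i.e.}\ $H(AB) = H(AE)$.

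The heart of the argument is therefore $H(AB) = H(AE)$ for unambiguous distributions. This follows because unambiguity sets up a bijection between the support of $P_{AB}$ and the support of $P_{AE}$ with matching probabilities: given $(a,b)$ in the support of $P_{AB}$, Eq.~\eqref{eq:OliveE} picks out the unique $e = e(a,b)$ with $p(a,b,e) \neq 0$, and then $p(a,b) = p(a,b,e(a,b))$ by Eq.~\eqref{eq:p(a,b)}; conversely, given $(a,e)$ in the support of $P_{AE}$, Eq.~\eqref{eq:OliveB} picks out the unique $b$ with $p(a,b,e) \neq 0$, and $p(a,e) = p(a,b,e)$. These two maps are mutual inverses and probability-preserving, so the multisets of nonzero values of $P_{AB}$ and $P_{AE}$ coincide, giving $H(AB) = H(AE)$. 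Combining, $A(P_{ABE}) = H(B) - H(E) = S(B) - S(E) = I(A\rangle B)_{\rho_{AB}}$.

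I expect the only real subtlety — the ``main obstacle'' — to be making the vanishing of off-diagonal terms in $\rho_B$ and $\rho_E$ airtight: one must argue that $\braket{a}{a}\braket{e}{e'}$-type cross terms in the trace genuinely force $e = e'$ whenever the coefficient $\sqrt{p(a,b,e)p(a,b,e')}$ is nonzero, which is exactly Eq.~\eqref{eq:OliveE}, and likewise for the $B$ computation via Eq.~\eqref{eq:OliveB}. Everything else is routine entropy bookkeeping once the bijection between $\mathrm{supp}(P_{AB})$ and $\mathrm{supp}(P_{AE})$ is in hand. Note that this argument uses all three of Eqs.~\eqref{eq:OliveA}–\eqref{eq:OliveE} only through~\eqref{eq:OliveB} and~\eqref{eq:OliveE}; Eq.~\eqref{eq:OliveA} is not needed here, consistent with the footnote in Lemma~\ref{Lem:PT} that several results need only a subset of the unambiguity conditions.
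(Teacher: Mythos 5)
Your proposal is correct and follows essentially the same route as the paper: both reduce the claim to showing $\rho_B$ and $\rho_E$ are diagonal (via Eqs.~\eqref{eq:OliveB} and~\eqref{eq:OliveE}), so that $S(B)=H(B)$ and $S(E)=H(E)$, and then establish $A(P_{ABE})=H(B)-H(E)$ by matching the $B$- and $E$-marginals conditioned on $A$ --- your bijection between $\mathrm{supp}(P_{AB})$ and $\mathrm{supp}(P_{AE})$ giving $H(AB)=H(AE)$ is just the paper's observation that for each fixed $a$ the conditional distributions on $B$ and $E$ agree up to relabeling, \emph{i.e.}\ $H(B|A)=H(E|A)$. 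Your closing remark that only Eqs.~\eqref{eq:OliveB} and~\eqref{eq:OliveE} are needed also matches the paper's own comment following the lemma.
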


\begin{proof}
The advantage of $P_{ABE}$ is given by 
\begin{align}
  A(P_{ABE}) &= I(A;B) - I(A;E) \\
              &= H(B)-H(B|A) - H(E) + H(E|A).
\end{align}
Since $P_{ABE}$ is unambiguous, from Eqs.~(\refx{eq:OliveB}) and~(\refx{eq:OliveE}), we know that for fixed $a$, conditional distributions 
on $B$ and $E$ are identical 
up to relabeling of the outputs.  Thus, for a fixed $a$ we have $H(B|A=a) = H(E|A=a)$, 
which implies $H(B|A) = H(E|A)$.  This gives us   
\begin{equation}
  A(P_{ABE}) = H(B) - H(E), \label{eq:PrivateInf}
\end{equation}
where the RHS is evaluated on the classical variables $B$ and $E$ distributed according to $p(a,b,e)$.  

Now, note that
\begin{align}\label{eq:CohInf}
  I(A\rangle B)_{\rho_{AB}} = S(B)-S(E),
\end{align}
with the entropies evaluated on $\proj{\psi}_{ABE}$.
From Eqs.~(\refx{eq:OliveE}) and (\refx{eq:OliveB}) applied to $\ket{\psi}_{ABE}$, we find that
\begin{align}
  \rho_{E}  & = \sum_{a,b,e}p(a,b,e)\proj{e}, \\ 
  \rho_{B}  & = \sum_{a,b,e}p(a,b,e)\proj{b}, 
\end{align}
so that the von~Neumann entropies on the right-hand side of Eq.~(\ref{eq:CohInf}) are identical to the Shannon entropies in Eq.~(\ref{eq:PrivateInf}),
which proves the result.
\end{proof}

Notice that the above proof used only Eqs.~(\refx{eq:OliveE}) and (\refx{eq:OliveB}).  If the roles of systems $A$ and $B$ are exchanged, one needs Eqs.~(\refx{eq:OliveE}) and (\refx{eq:OliveA}) instead.  In the following theorem, the roles of $A$ and $B$ are not known in advance, so we demand that $P_{ABE}$ is unambiguous (\emph{i.e.}, satisfies all three equations).  The first half of our proof relies on the above lemma with the roles of $A$ and $B$ possibly exchanged.

\begin{theorem}\label{thm:Mainx}
Let $P_{ABE}$ be an unambiguous probability distribution and $\ket{\psi}_{ABE}$ be the associated quantum state. The distillable entanglement of $\ket{\psi}_{ABE}$ is at least as big as the distillable key by public discussion of $P_{ABE}$,
\begin{equation}
  D(\psi_{ABE}) \geq K_{\rm PD}(P_{ABE}).
  \label{eq:Noiseless}
\end{equation}
The distillable key of $\ket{\psi}_{ABE}$ is at least as big as the distillable key by public discussion and noisy processing of $P_{ABE}$,
\begin{equation}
  K(\psi_{ABE}) \geq K(P_{ABE}).
  \label{eq:Noisy}
\end{equation}
\end{theorem}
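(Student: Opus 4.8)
The strategy is to "lift" an arbitrary classical key-distillation protocol acting on $P_{ABE}$ to a quantum LOCC protocol acting on $\ket{\psi}_{ABE} = \sqrt{P_{ABE}}$, in such a way that the quantum protocol distills at least as much of the appropriate resource (EPR pairs in the noiseless case, private key in the noisy case). First I would fix a generic step of a classical protocol: a distribution $Q_{ABE}$ at some intermediate point, together with the operation applied. For public discussion this is a message $m$ drawn from $q(m|a)$ (or $q(m|b)$); for the noisy-processing case it is additionally a stochastic relabeling of Alice's variable, which — by the footnoted observation that one may always introduce a fresh auxiliary variable rather than overwrite — can be taken to be a channel that creates a new register $X$ correlated with $A$. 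The point of Lemma~\ref{Lemma:Unambiguity} is that unambiguity is preserved under such steps, so by induction every intermediate distribution in the protocol is unambiguous, and in particular the final one is.

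Next I would record the key algebraic fact making the lift work: entrywise, $\sqrt{M\cdot Q_{ABE}} = \sqrt{M}\cdot\sqrt{Q_{ABE}}$, where $\sqrt{M}$ is the entrywise square root of the stochastic map. Concretely, for a public-discussion step the associated quantum move is Alice's POVM with Kraus operators $A_m = \sum_a \sqrt{q(m|a)}\,\proj{a}_A$, broadcasting $m$ to Bob and Eve; Lemma~\ref{Lemma:LOCC-PD} already says this produces exactly $\rho_{AM,BM}$ from $\rho_{AB}$ by LOCC, and one verifies the purification on Eve's side also matches (so the tripartite pure state $\ket{\psi}$ tracks $P_{ABE}$ at every step). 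Composing over all rounds, a classical protocol ending in some unambiguous $Q_{ABE}$ is mirrored by a quantum protocol ending in $\ket{\chi} = \sqrt{Q_{ABE}}$.

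Now I would close the argument separately for the two inequalities. For Eq.~\eqref{eq:Noiseless}: a public-discussion protocol that distills $K_{\rm PD}$ bits of key must, in the remanent-device sense, end with a distribution whose $A$–$B$ correlation is essentially a perfect shared string uncorrelated with $E$ and with all auxiliary variables; for an unambiguous distribution this forces the corresponding $\rho_{AB}$ to be close to maximally entangled EPR pairs of the same rate (this is where the bi-disjoint/unambiguous structure is essential — the trash systems are themselves classical and unambiguously correlated, so "privacy against Eve holding all trash" is exactly "maximal entanglement"). Hence the quantum lift distills EPR pairs at rate $\geq K_{\rm PD}(P_{ABE})$, giving $D(\psi_{ABE})\geq K_{\rm PD}(P_{ABE})$. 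For Eq.~\eqref{eq:Noisy}: the noisy-processing step introduces register $X$ from $A$, and by Lemma~\ref{Lemma:CoherentInformation} (applied with $X$ in the role of the system held by the party keeping the key) the classical advantage $I(X;B)-I(X;E)$ equals the coherent information $I(X\rangle B)_{\rho_{XB}}$ of the lifted state; since coherent information lower-bounds distillable key quantumly (and the noisy-processing-plus-one-way advantage lower-bounds $K(P_{ABE})$ classically, with equality achievable in the limit), iterating and optimizing yields $K(\psi_{ABE})\geq K(P_{ABE})$.

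The main obstacle is the first inequality, specifically making precise the claim that for an unambiguous distribution, a string that is private against Eve \emph{even when Eve holds all the trash} is quantumly a maximally entangled state. This is exactly the content of the remanent-device discussion in Appendix~\ref{apx:Remanence}: one must argue that no residual "shield" can survive, because every auxiliary register created during a public-discussion protocol is classical and, by the preserved unambiguity, perfectly determined by the others — so there is nothing left to hold a shield that would decouple a merely-classical key from its purification. I expect the technical work to be in tracking the continuity estimates (distilled key is only $\varepsilon$-close to ideal, so the lifted state is only $\delta(\varepsilon)$-close to EPR pairs) and in confirming that the LOCC simulation of Lemma~\ref{Lemma:LOCC-PD} extends to the full multi-round, two-way setting without the roles of $A$ and $B$ being fixed in advance — which is precisely why the theorem hypothesizes full unambiguity (all of Eqs.~\eqref{eq:OliveA}--\eqref{eq:OliveE}) rather than the one-sided subset sufficient for Lemma~\ref{Lemma:CoherentInformation}.
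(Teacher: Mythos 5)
Your lifting strategy and your use of Lemmas~\ref{Lemma:Unambiguity} and~\ref{Lemma:LOCC-PD} match the paper, but you have attached the key technical tool to the wrong inequality, and both of your closing arguments break down. For Eq.~\eqref{eq:Noisy} you invoke Lemma~\ref{Lemma:CoherentInformation} on the post-noisy-processing state to claim $I(X;B)-I(X;E)=I(X\rangle B)_{\rho_{XB}}$. This cannot be right: the stochastic map $A\to X$ destroys unambiguity (several values of $a$ map to the same $x$, so the pair $(x,e)$ no longer determines $b$), so the lemma's hypothesis fails; worse, since $\rho_{XB}$ is obtained from $\rho_{AB}$ by a local channel, your claimed identity plus the hashing bound would give $D(\rho_{AB})\geq I(X;B)-I(X;E)>0$ for the paper's PT-invariant examples---contradicting the bound entanglement the construction is designed to exhibit. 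The paper's actual proof of Eq.~\eqref{eq:Noisy} is much softer: Alice and Bob locally copy their registers in the standard basis, which dephases Eve and reproduces exactly the classical situation, and then they simply run the classical protocol; its security carries over verbatim.

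For Eq.~\eqref{eq:Noiseless} you argue that a $K_{\rm PD}$-protocol must terminate with a key uncorrelated with all auxiliary variables, whence the lifted state on the key registers is near-maximally entangled. But the classical key produced by the final \ECPA{} step is a (stochastic) function of $A^n$ and the messages, so in the lifted pure state the key registers are classically correlated with---indeed determined by---the trash, and their reduced state is separable, not close to EPR pairs; making your structural claim precise is exactly the part you defer, and it is the part that fails. The paper never lifts the final classical key-extraction step. Instead it stops after the public-discussion phase, uses the protocol normal form (Lemma~\ref{lem:Deferral}) to lower-bound the advantage $\frac{1}{n}A(P_{A^n\!M,B^n\!M,E^n\!M})$ by $R-\delta$, applies Lemma~\ref{Lemma:CoherentInformation} \emph{here}---where unambiguity is preserved by Lemma~\ref{Lemma:Unambiguity}---to equate that advantage with the coherent information of the LOCC-generated state $\rho_{A^n\!M,B^n\!M}$, and then replaces classical \ECPA{} by quantum hashing, for which the coherent information is an achievable entanglement-distillation rate. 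Repairing your argument essentially requires rediscovering this substitution.
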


\begin{proof}
Suppose we can achieve a key rate $R$ by using public discussion to distill from $P_{ABE}$.  Then, for every $\delta > 0$ there is an $n \geq 1$ and a public discussion protocol with message history $M$ yielding distribution $P_{A^n\!M, B^n\!M, E^n\!M}$ that gives advantage $\frac{1}{n} A(P_{A^n\!M, B^n\!M, E^n\!M}) > R - \delta$, where the roles of $A$ and $B$ might possibly be exchanged.  By Lemma~\ref{Lemma:Unambiguity} we know that $P_{A^n\!M, B^n\!M, E^n\!M}$ is unambiguous, so that Lemma~\ref{Lemma:LOCC-PD} implies that the corresponding quantum state $\rho_{A^n\!M, B^n\!M}$ can be generated from $n$ copies of $\rho_{AB}$ by LOCC.  Furthermore, Lemma~\ref{Lemma:CoherentInformation} implies that the coherent information of $\rho_{A^n\!M, B^n\!M}$ is equal to $A(P_{A^n\!M, B^n\!M, E^n\!M})$, where the roles of $A$ and $B$ again might be exchanged.  Since the coherent information is an achievable rate of entanglement distillation, we thus find that the distillable entanglement of $\rho_{AB}$ is at least and can be made arbitrarily close to $R$.  This establishes the first part of the theorem.

For the second part, Alice and Bob use a particular ``classical'' strategy to distill the key from $\ket{\psi}_{ABE}$. That is, they make local copies of the variables they have and then proceed with the classical protocol. The local copies ensure that Eve is dephased in the standard basis. The security of the classical protocol implies that Eve is ignorant of the key also in the quantum case.
\end{proof}

The second part of this theorem has already been obtained by Christandl, Ekert, Horodecki, Horodecki, Oppenheim, and Renner in a slightly different form (see Corollary~3 in their paper).  They only assume that $P_{ABE}$ satisfies Eq.~(\refx{eq:OliveE}), which is indeed sufficient to derive Eq.~(\ref{eq:Noisy}).  However, in addition they also claim equality in Eq.~(\ref{eq:Noisy}) but provide no proof.  We also believe that equality might hold and leave it as an open problem.  A related open problem asks whether equality holds in Eq.~(\ref{eq:Noiseless}).

\section{Deferral of noisy processing} \label{apx:Noise}

In this appendix we argue that classical randomized private key distillation protocols can without loss of generality be cast in a specific form.  In the main text we described two types of protocols (see Fig.~\ref{fig:Distillation}): ones that involve a noisy processing step, which can modify the local random variables by a stochastic map, and ones that do not. The following lemma shows that these two types of protocols are the most general ones for the cases of not having noisy processing and having noisy processing, respectively.

\begin{figure}[!ht]



\definecolor{darkorange}{rgb}{0.9, 0.4, 0.0}
\definecolor{lightgreen}{rgb}{0.7, 0.9, 0.7}

\begin{tikzpicture}[
  > = latex',
  puffy/.style = {cloud, cloud puffs = 11, cloud ignores aspect, fill = orange!40, draw = black},
  noise/.style = {starburst, starburst point height = 5pt, draw = darkorange, fill = yellow},
  box/.style = {draw = black, fill = lightgreen},
  reg/.style = {
}]

\def\w{1.9}; 
\def\h{1.1}; 

\node (P) at (0,0) [puffy] {$P_{ABE}$};
\foreach \n/\i in {A/-1, E/0, B/1} {
  \node (\n) at (\i*\w,-\h) {$\n$};
  \draw [->] (P) to (\n);
}
\path (A) + (0,-1.0*\h) node[reg] (A1) {$A M_1$};
\path (B) + (0,-1.5*\h) node[reg] (B1) {$B M_1$};
\draw [->] (A) to (A1);
\draw [->] (B) to (B1);
\draw [->] (A1) to node[above] {$M_1$} (B1);
\path (B1) + (0,-1.0*\h) node[reg] (B2) {$B M_1 M_2$};
\path (A1) + (0,-2.0*\h) node[reg] (A2) {$A M_1 M_2$};
\draw [->] (A1) to (A2);
\draw [->] (B1) to (B2);
\draw [->] (B2) to node[above] {$M_2$} (A2);
\path (B2) + (0,-0.60*\h) node {\vdots};
\path (A2) + (0,-0.60*\h) node {\vdots};
\path (A2) + (\w,-0.35*\h) node {\vdots};
\path (A2) + (0,-1.5*\h) node[reg] (An) {$A M_1 \dots M_n$};
\path (B2) + (0,-1.5*\h) node[reg] (Bn) {$B M_1 \dots M_n$};
\draw [->] (Bn) to node[above] {$M_n$} (An);
\path (An) + (0,-1.1*\h) node[noise] (A') {$\tilde{A}$};
\path (Bn) + (0,-1.6*\h) node[noise] (B') {$\tilde{B}$};
\path (A') + (\w,+6pt) node {noisy local};
\path (A') + (\w,-6pt) node {processing};
\draw [->] (An) to (A');
\draw [->] (Bn) to (B');
\path (A') + (0,-1.0*\h) node[box] (a) {\ECPA};
\path (B') + (0,-1.5*\h) node[box] (b) {\ECPA};
\draw [->] (A') to (a);
\draw [->] (B') to (b);
\draw [->] (a) to (b);
\path (a) + (0,-1.3*\h) node (KA) {$K$};
\path (b) + (0,-0.8*\h) node (KB) {$K$};
\draw [->] (a) to (KA);
\draw [->] (b) to (KB);

\end{tikzpicture}


\caption{\label{fig:Distillation}
Alice and Bob use public discussion to distill a secret key $K$ from $P_{ABE}$ in the presence of an eavesdropper Eve.  At the $i$th step of the protocol, either Alice or Bob produces a public message $M_i$ that stochastically depends on her/his respective variables $A$ and $B$ and all messages from previous rounds$^9$; it becomes available to the eavesdropper as well as the legitimate recipient.  Each such operation is an isometry$^{10}$, since it only introduces a new register.  This explicitly keeps track of the trash registers accumulated during the protocol.  In the final step, error correction and privacy amplification (\ECPA{}) are applied on the original variables and the messages.  \ECPA{} is a one-way protocol and can be performed either from Alice to Bob or Bob to Alice.  Before \ECPA{}, a local noisy processing step may be included. It corresponds to modifying one's random variables by applying a stochastic map.  Without noisy processing, no key can be extracted from PT-invariant unambiguous distributions.  However, allowing noisy processing at the last step is sufficient to obtain the most general randomized distillation protocol (see Lemma~\ref{lem:Deferral}).  Indeed, such protocol can extract secret key from our distributions listed in Appendix~\ref{apx:Examples}.}
\end{figure}
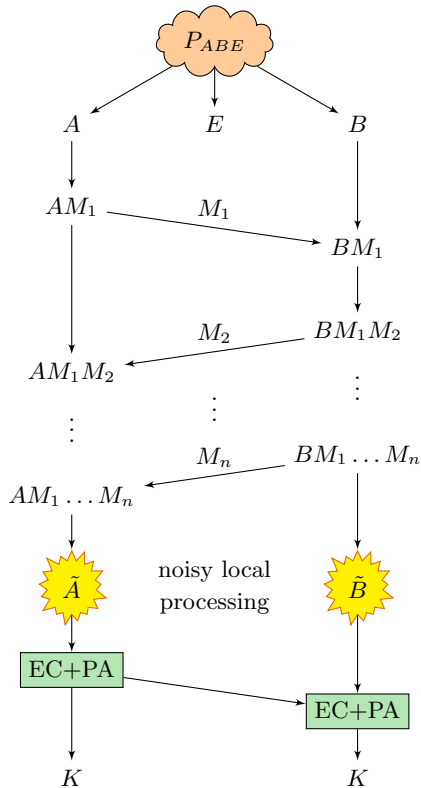

\vspace{0.5cm}

\begin{lemma}\label{lem:Deferral}
The following holds:
\begin{enumerate}
  \item if noisy processing is not involved, there is no advantage for Alice and Bob to introduce extra local random variables other than those initially given to them (\textit{i.e.}, $A$ for Alice and $B$ for Bob);
  \item if noisy processing is allowed throughout the protocol, it can always be deferred till the very last step.
\end{enumerate}
\end{lemma}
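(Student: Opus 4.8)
The plan is to prove each part by a local rewriting of an arbitrary classical randomized protocol that does not decrease the attainable key rate. For the first part, the decisive feature of the model without noisy processing is that every register a party creates but does not broadcast is eventually surrendered to Eve as trash. So suppose that at some round Alice creates a private register $L$ --- a stochastic function of her current data --- and keeps it. I would modify the protocol to broadcast $L$ instead. Since Eve obtains $L$ at the end of the original protocol anyway and is a passive eavesdropper, revealing it to her earlier leaves her final view unchanged, while Bob can only gain; hence the modified protocol achieves at least the original rate. Iterating over all rounds in order (so that whatever was created before a given register is already public by the time we reach it), and over both parties, we may assume every created register is broadcast, i.e., is a public message. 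The randomness used merely to \emph{generate} a message is handled the same way --- broadcast it and fold it into the public transcript --- so each message becomes a deterministic function of the sender's original variable and the transcript. This leaves $A$ for Alice and $B$ for Bob as the only private variables, which is the claim.

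For the second part, fix a protocol in which noisy processing steps are interleaved with message rounds; each such step is a stochastic map $S \to T$ that one party applies to some of its registers. I would replace every such step by the instruction ``create $T$ as the same stochastic function of $S$, but \emph{keep} $S$ as well''. The party then carries $S$ as inert baggage: in every later round it behaves exactly as in the original protocol, using $T$ wherever the original used it and never touching the retained copy. Consequently all message distributions and the views of Bob and of Eve are identical to those in the original run, so the rewritten protocol attains the same rate. Finally, just before \ECPA{}, each party discards all of its retained copies at once; the composition of these discards is a single stochastic map, i.e., one noisy processing step. The resulting protocol thus consists of public discussion, then one noisy processing step, then \ECPA{}, as asserted.

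The step I expect to be the main obstacle is the deferral in the second part. One cannot simply postpone the \emph{maps} and apply a combined stochastic map at the end, because messages sent after a noisy processing step depend on the \emph{noised} values, and if a noised register is read in several later rounds those reads must be mutually consistent. Retaining the original and deferring only its \emph{discard} resolves both issues: the noised value $T$ is still produced on schedule, so the downstream protocol is reproduced verbatim, while the only operation actually delayed --- destruction of information --- cannot help a passive Eve, who never sees a discarded register no matter when it is erased. Checking that ``discard all retained copies'' is a single legitimate noisy processing step, and that carrying the extra copies changes nothing for Bob or Eve, is then routine; the first part is straightforward once one notes that the timing of trash disclosure is irrelevant against a passive eavesdropper.
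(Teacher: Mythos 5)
Your part~2 is essentially the paper's argument: replace each destructive stochastic map by ``compute the new variable but retain the old one,'' observe that the downstream run is then reproduced verbatim, and collapse all the deferred discards into a single stochastic map applied just before \ECPA{}. Your part~1, however, takes a genuinely different route. The paper never broadcasts anything: it marginalizes out the local variables to obtain conditional distributions for each message $M_i$ given only the previous messages and the sender's original variable, generates the messages directly from those conditionals, regenerates the local variables only at the very end, and invokes the chain rule for mutual information to argue that the regenerated variables cannot change the \ECPA{} rate. You instead broadcast every private register on the grounds that Eve receives it as trash anyway. Within the $K_{\rm PD}$ model (where all non-key variables are surrendered to Eve) your argument is sound and arguably more transparent. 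The trade-off is generality: the paper's part~1 argument makes no use of the trash-to-Eve assumption, and that is precisely what lets the paper re-apply it inside the proof of part~2 --- where the trash stays private --- so as to also defer the \emph{creation} of the intermediate variables and arrive at the canonical form of Fig.~\ref{fig:Distillation}, in which every message depends only on the sender's original variable and the transcript. Your version of part~2 stops short of that: after your rewriting the messages still depend on noised registers created mid-protocol. This satisfies the literal statement of claim~2 (the only information-destroying step occurs at the end), but to obtain the full normal form the paper relies on, you would need a part~1 argument that, like the paper's, does not assume Eve receives the trash.
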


\footnotetext{Lemma~\ref{lem:Deferral} shows that this is without loss of generality, \emph{i.e.}, Alice and Bob do not need to introduce new random variables or modify existing ones.}
\footnotetext{An isometry preserves the $2$-norm in the quantum case and the $1$-norm in the classical case. A classical isometry is a stochastic map that sends different standard basis vectors to probability distributions with disjoint supports.}

\begin{proof}
For the first claim, note that each step of any classical randomized protocol can be described in terms of a conditional probability distribution: the probability of generating a particular value of any new random variable in terms of already existing variables.  These can be used to generate a joint probability distribution of all variables---local variables and messages alike---throughout the protocol.  Specifically, at the end of the protocol these give a distribution on $(A, A_1, \dotsc, A_n, B, B_1, \dotsc, B_n, M_1, \dotsc, M_n)$, where $A_i$ and $B_i$ are the local random variables generated in the $i$th round of the protocol by Alice and Bob, respectively, $M_i$ is the $i$th message sent, and $n$ is the number of rounds.  From this, we can compute a set of conditional distributions for messages $M_i$ conditioned solely on the previous messages and either $A$ or $B$ (depending on whether Alice or Bob generates $M_i$).  Thus, in the last step Alice can first generate $(A, M_1, \dotsc, M_n)$ and then use the conditional probability distributions for the $A_i$ given $(A, M_1, \dotsc, M_n)$ to generate the remaining local random variables $(A_1, \dotsc, A_n)$.  Bob can generate his local random variables similarly.  However, generating these extra variables at the last step of the protocol does not affect the final rate achieved by \ECPA{}. This follows by using the chain rule for the mutual information.

For the second claim, let's assume that Alice and Bob use local noisy processing at every step of the protocol, and let $\tilde{A}_i$ and $\tilde{B}_i$ denote their local random variables at step $i$. For example, Alice produces $\tilde{A}_i$ and $M_i$ from $(\tilde{A}_{i-1}, M_1, \dotsc, M_{i-1})$ by a stochastic map.  However, imagine that instead of destroying her previous random variable $\tilde{A}_{i-1}$ every time, Alice makes a local copy of it and keeps it around.  Such protocol has exactly the same form as described in the first scenario, except in the last step both parties have to destroy all their local variables except the last one.  Now we can apply the same argument as before and deffer the creation of the local variables till the last step.  After that each party destroys all their local variables except the last one. This yields an equivalent randomized protocol, where noisy processing is applied only in the last step.
\end{proof}

\section{Construction and examples} \label{apx:Examples}

In this appendix we discuss our construction in more detail and provide several examples.  A summary of our examples is given in a table in the main text.  These examples are obtained as follows.  First, we choose a graph that determines the combinatorial structure of $P_{ABE}$ and guarantees that it is unambiguous (see Fig.~\refx{fig:Olives} in the main text).  Next, we parametrize the reduced distribution $P_{AB}$ and impose the restrictions from Lemma~\ref{Lem:PT}, which guarantee that $\rho_{AB}$ is PT-invariant; we also parametrize the noisy processing map $Q_{X|A}$.  Finally, we optimize the parameters numerically to maximize the amount of distillable private key.

\begin{theorem}\label{thm:cPBEx}
Unambiguous probability distributions $P_{ABE}$ with $K_{\rm PD}(P_{ABE}) = 0$ and $K(P_{ABE}) > 0$ exist.
\end{theorem}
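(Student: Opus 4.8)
## Proof proposal

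The plan is to prove Theorem~\ref{thm:cPBEx} by an explicit construction rather than an abstract existence argument, following the recipe already outlined in Appendix~\ref{apx:Examples}. The two conditions $K_{\rm PD}(P_{ABE}) = 0$ and $K(P_{ABE}) > 0$ will be established by two completely different mechanisms, so I would treat them separately. For the vanishing of $K_{\rm PD}$, the work is already done for us: by Theorem~\ref{thm:Mainx} we have $K_{\rm PD}(P_{ABE}) \leq D(\psi_{ABE})$, and by the discussion around Eq.~\eqref{eq:PT} a PT-invariant $\rho_{AB}$ has positive partial transpose, hence $D(\psi_{ABE}) = 0$~\cite{Bound}. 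So the only structural requirement on $P_{ABE}$ for the first condition is that it be unambiguous and that the associated $\rho_{AB}$ be PT-invariant. Lemma~\ref{Lem:PT} turns this into a concrete finite list of constraints on the diagram of $P_{ABE}$ (the diagram must be a union of crosses) and on the entries of $P_{AB}$ (each $2\times 2$ submatrix corresponding to a cross has vanishing determinant).

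First I would fix the combinatorial skeleton: choose the $3\times 3$ diagram of Fig.~\ref{fig:Olives}, consisting of three crosses (two small, one large), which forces $d_E = 4$ and automatically guarantees unambiguity by construction. Next I would parametrize $P_{AB}$ by its nonzero entries $p_{ab}$ subject to normalization and the three determinant conditions from Lemma~\ref{Lem:PT} (the ones displayed just before that lemma), which cut the parameter space down to a low-dimensional family on which $\rho_{AB}$ is guaranteed PT-invariant. Given $P_{AB}$, the full distribution $P_{ABE}$ is determined up to relabeling of $E$ (as noted after Fig.~\ref{fig:Olives}), so $P_{ABE}$ is now a function of the remaining free parameters. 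Then I would introduce a noisy-processing channel $Q_{X|A}$ from Alice's variable $A$ to a new variable $X$, parametrized by its transition probabilities, and consider the resulting distribution $P_{XBE}$.

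The positivity of $K(P_{ABE})$ then follows from the Csisz\'ar--K\"orner bound together with noisy processing: one has
\begin{equation}
  K(P_{ABE}) \geq I(X;B) - I(X;E),
\end{equation}
where the right-hand side is evaluated on $P_{XBE}$ obtained by applying $Q_{X|A}$ to $A$ (this is exactly the quantity referenced in the proof sketch of Theorem~\ref{thm:cPBE}). The final step is to exhibit an explicit choice of the free parameters of $P_{AB}$ and of $Q_{X|A}$ for which $I(X;B) - I(X;E) > 0$; numerically optimizing this expression over the admissible parameter region yields the value reported in Table~\ref{tab:Summary} for the $3\times 3$ case, and one can then simply verify by direct substitution that the claimed optimum (or any nearby feasible point) gives a strictly positive advantage. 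This establishes $K(P_{ABE}) > 0$ while $K_{\rm PD}(P_{ABE}) = 0$, proving the theorem; the larger-dimensional examples in the table arise from the same procedure applied to larger cross-diagrams.

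The main obstacle is the positivity half: a priori it is not obvious that the PT-invariance constraints (which force several rank-one $2\times 2$ blocks and thereby make $\rho_{AB}$, and hence the joint distribution, quite ``aligned'') still leave enough room for the mutual-information difference $I(X;B)-I(X;E)$ to be made positive by some noisy processing $Q_{X|A}$. Without noisy processing the difference $I(A;B)-I(A;E)$ is in fact non-positive here — this is the ``bound'' nature of the distribution, and is precisely why public discussion alone fails — so the argument genuinely relies on choosing the channel $Q_{X|A}$ cleverly, i.e.\ on the phenomenon (anticipated in~\cite{AGM06}) that noisy preprocessing can create a positive secret-key rate where none existed. Verifying that the numerically found parameters indeed satisfy all the PT-invariance constraints exactly (so that $K_{\rm PD}=0$ holds rigorously, not just approximately) is the other point requiring care; this is why it is cleanest to parametrize the PT-invariant family first and optimize only within it, so that every candidate distribution is PT-invariant by construction.
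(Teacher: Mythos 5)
Your proposal follows essentially the same route as the paper's own proof in Appendix~\refx{apx:Examples}: fix a cross-diagram to guarantee unambiguity, impose the determinant constraints of Lemma~\refx{Lem:PT} so that $\rho_{AB}$ is PT-invariant (hence $K_{\rm PD}=0$ via Theorem~\refx{thm:Mainx}), and then numerically maximize $I(X;B)-I(X;E)$ over the constrained family together with a noisy-processing channel $Q_{X|A}$ to certify $K(P_{ABE})>0$. The approach and the key ingredients match; no substantive differences to report.
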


\begin{proof}
First, we fix $d_A$ and $d_B$, the dimensions of $A$ and $B$, and choose a diagram in the $d_A \times d_B$ grid that encodes the combinatorial structure of $P_{ABE}$.  We specify a vertex in this diagram by $(a,b) \in A \times B$ and use $(a,b) - (a',b')$ to denote adjacent vertices.  The diagram must satisfy the following two properties:
\begin{enumerate}
  \item it is a union of disjoint diagonal cliques (a \emph{clique} is a collection of vertices such that each pair is adjacent; a clique is \emph{diagonal} if for every pair $(a,b) \neq (a',b')$ we have $a \neq a'$ and $b \neq b'$);
  \item it is a union of overlapping crosses (a \emph{cross} is a pair of edges $(a,b) - (a',b')$ and $(a,b') - (a',b)$ for some $a \neq a'$ and $b \neq b'$).
\end{enumerate}
The first condition guarantees that the distribution is unambiguous. The second condition is necessary for PT-invariance (see Lemma~\ref{Lem:PT}).  The number of cliques in the diagram determines $d_E$, the dimensions of $E$.

Next, we introduce variables $\set{P_{AB}(a,b) : (a,b) \in A \times B}$ that parametrize the distribution $P_{AB}$.  Recall that $P_{AB}$ together with the diagram determines a tripartite unambiguous distribution $P_{ABE}$.  To describe the noisy processing, we introduce a conditional distribution $Q_{X|A}$ parametrized by $\set{Q_{X|A}(x|a) : (x,a) \in X \times A}$. These variables are subject to normalization constraints
\begin{align}
  \sum_{\substack{a \in A \\ b \in B}} P_{AB}(a,b) &= 1, & \forall a \in A: \sum_{x \in X} Q_{X|A}(x|a) &= 1.
  \label{eq:Normalization}
\end{align}
Furthermore, $P_{AB}$ is also subject to constraint
\begin{equation}
  \det \bigl[ P_{AB}(\set{a,a'} \times \set{b,b'}) \bigr] = 0
  \label{eq:Cross}
\end{equation}
for every $2 \times 2$ submatrix of $P_{AB}$ corresponding to a cross formed by edges $(a,b) - (a',b')$ and $(a,b') - (a',b)$.  Under these constraints, Lemma~\ref{Lem:PT} implies that the associated quantum state $\rho_{AB}$ is PT-invariant.  Hence, $D(\rho_{AB}) = 0$ and we are guaranteed by Theorem~\ref{thm:Mainx} that $K_{\rm PD}(P_{ABE}) = 0$.

To obtain a positive value for $K(P_{ABE})$, we numerically optimize
\begin{equation}
  \max_{P_{AB}, Q_{X|A}} I(X;B) - I(X;E)
  \label{eq:max}
\end{equation}
subject to constraints in Eqs.~(\ref{eq:Normalization}) and~(\ref{eq:Cross}).  Here the mutual informations are evaluated on the distribution $P_{XBE}$ defined via
\begin{equation}
  P_{XBE}(x,b,e) := \sum_{a \in A} Q_{X|A}(x|a) P_{ABE}(a,b,e).
\end{equation}
A table in the main text summarizes our findings for various small dimensions.  The structure diagrams of $P_{ABE}$ together with explicit values of $P_{AB}$ and $Q_{X|A}$ are provided on the last page.
\end{proof}

\begin{figure}[!ht]

  \input{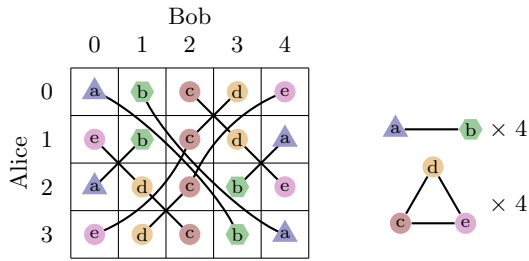}

\caption{Diagram for the $4 \times 5$ example with additional structure imposed on $P_{AB}$. The convention we use to represent unambiguous distributions is explained in Fig.~\refx{fig:Olives} in the main text. In addition, here we use different types of nodes to represent different variables in our parametrization of $P_{AB}$ in Eq.~(\ref{eq:P and Q}). Note that for each $e \in E$, the reduced distribution on $AB$ is perfectly correlated between $A$ and $B$ (up to relabeling of outputs) and is given (up to normalization) by either $(a,b)$ or $(c,d,e)$, each appearing exactly four times.}
\end{figure}

\begin{example}
To illustrate our method, consider the $4 \times 5$ diagram shown above.  In this case, we can choose the parametrization of $P_{AB}$ and $Q_{X|A}$ so that the optimization problem becomes especially simple:
\begin{align}
  P_{AB} &= \frac{1}{4} \mx{
    a & b & c & d & e \\
    e & b & c & d & a \\
    a & d & c & b & e \\
    e & d & c & b & a } & \text{and} &&
  Q_{X|A} &= \mx{
    1 & 1 & 0 & 0 \\
    0 & 0 & 1 & 1 },
  \label{eq:P and Q}
\end{align}
where $a,b,c,d,e \geq 0$ are such that $a + b + c + d + e = 1$.  This automatically satisfies Eq.~(\ref{eq:Normalization}).  Furthermore, we get Eq.~(\ref{eq:Cross}) by simply imposing $a b = d e$, which can be seen by inspecting the crosses in the diagram above.  One can verify that the objective function in Eq.~(\ref{eq:max}) can be expressed as
\begin{equation}
  f(a,b,c,d,e) :=
    - a - c - e + h(a) - h(d) + h(e)
    - h(a + b) + h(b + d) + h(c + d) - h(c + d + e),
\end{equation}
where $h(p) := - p \log_2 p$.  A simple solution can be obtained by choosing
\begin{equation}
  f \left( \frac{1}{10}, \frac{1}{10}, \frac{3}{8}, \frac{1}{40}, \frac{2}{5} \right) \approx 0.0347590.
\end{equation}
The optimal solution can be found numerically and is given on the last page.  It happens to have the same structure as Eq.~(\ref{eq:P and Q}), even though we did not impose any structure on $P_{AB}$ and $Q_{X|A}$ in our program (the only input to our program is a diagram that specifies the location of non-zero entries of $P_{ABE}$).
\end{example}

On the next page we provide a table of unambiguous probability distributions $P_{ABE}$ in various small dimensions obtained using the method described in Theorem~\ref{thm:cPBEx}.  We use the graphical representation explained in Fig.~\refx{fig:Olives} (in the main text) to specify the combinatorial structure of $P_{ABE}$.  For each distribution, we provide the diagram that we chose.  We also provide a numerical lower bound for $K(P_{ABE})$---obtained by numerically optimizing Eq.~(\ref{eq:max})---together with the optimal distribution $P_{AB}$.  Rows of $P_{AB}$ correspond to Alice and columns to Bob, and symbol $\yuzz$ indicates that Alice and Bob's variables never have the corresponding value (see Eq.~(\ref{eq:e(a,b)}) for more details).  In addition, we also list the optimal conditional distribution $Q_{X|A}$ that describes the noisy processing performed by Alice to obtain a random variable $X$ from $A$ (rows of $Q_{X|A}$ correspond to $X$ and columns correspond to Alice). In all cases, we chose $X$ to be of dimension two which was sufficient for obtaining a positive rate.


\csdef{P3x3}{
$P_{AB} =
\begin{pmatrix}
 0.167184 & 0.171529 & 0.001243 \\
 0.089041 & 0.091355 & 0.017492 \\
 0.441714 & 0.017157 & 0.003285
\end{pmatrix}$
}

\csdef{Q3x3}{
$Q_{X|A} =
\begin{pmatrix}
 1 & 0 & 0.670965 \\
 0 & 1 & 0.329035
\end{pmatrix}$
}


\csdef{P4x4}{
$P_{AB} =
\begin{pmatrix}
 0.024798 & 0.119200 & 0.128999 & 0.009393 \\
 0        & 0.087320 & 0.094498 & 0.035793 \\
 0.128999 & 0.119200 & 0.024798 & 0.009393 \\
 0.094498 & 0.087320 & 0        & 0.035793
\end{pmatrix}$
}

\csdef{Q4x4}{
$Q_{X|A} =
\begin{pmatrix}
 1 & 0 & 1 & 0 \\
 0 & 1 & 0 & 1
\end{pmatrix}$
}


\csdef{P4x5}{
$P_{AB} =
\begin{pmatrix}
 0.015228 & 0.033970 & 0.092123 & 0.004989 & 0.103690 \\
 0.103690 & 0.033970 & 0.092123 & 0.004989 & 0.015228 \\
 0.015228 & 0.004989 & 0.092123 & 0.033970 & 0.103690 \\
 0.103690 & 0.004989 & 0.092123 & 0.033970 & 0.015228
\end{pmatrix}$
}

\csdef{Q4x5}{
$Q_{X|A} =
\begin{pmatrix}
 1 & 1 & 0 & 0 \\
 0 & 0 & 1 & 1
\end{pmatrix}$
}


\csdef{P5x6}{
$P_{AB} =
\begin{pmatrix}
 0.076349 & 0.004299 & 0.070542 & 0        & 0        & 0.014384 \\
 0.014674 & 0.006016 & 0.098724 & 0.006016 & 0.014674 & 0.098724 \\
 0.050896 & 0.020867 & 0.047025 & 0.020867 & 0.050896 & 0.047025 \\
 0        & 0        & 0.014384 & 0.004299 & 0.076349 & 0.070542 \\
 0        & 0.022142 & 0.074083 & 0.022142 & 0        & 0.074083
\end{pmatrix}$
}

\csdef{Q5x6}{
$Q_{X|A} =
\begin{pmatrix}
 0 & 0 & 1 & 0 & 1 \\
 1 & 1 & 0 & 1 & 0
\end{pmatrix}$
}


\csdef{P6x5}{
$P_{AB} =
\begin{pmatrix}
 0.026574 & 0.061138 & 0.065969 & 0        & 0        \\
 0.003409 & 0.056660 & 0.061138 & 0        & 0.011779 \\
 0.004843 & 0.080489 & 0.012023 & 0.026034 & 0.089945 \\
 0        & 0.056660 & 0.061138 & 0.003409 & 0.011779 \\
 0        & 0.061138 & 0.065969 & 0.026574 & 0        \\
 0.026034 & 0.080489 & 0.012023 & 0.004843 & 0.089945
\end{pmatrix}$
}

\csdef{Q6x5}{
$Q_{X|A} =
\begin{pmatrix}
 0 & 1 & 1 & 0 & 1 & 0 \\
 1 & 0 & 0 & 1 & 0 & 1
\end{pmatrix}$
}


\newpage
\newgeometry{vmargin = 0.9in, hmargin = 0.58in}

\begin{tabular}{|>{\centering}m{5.1cm}|@{\hspace{10pt}}m{11.6cm}|}
  \hline
  
  \drawAB{3}{3}{
  \draw[state] (00) -- (11) -- (22);
  \draw[state] (01) -- (10);
  \draw[state] (12) -- (21);
  \draw[state] (02) .. controls +(225-\dangle:\step) and +(45+\dangle:\step).. (20);
  \path (21) + (0,-1.2*\step) node {$K(P_{ABE}) \geq \csuse{3x3}$};
}

 & \csuse{P3x3} \newline \vspace{8pt} \newline \csuse{Q3x3} \\ \hline
  
  \drawAB{4}{4}{
  \emptynode{10};
  \emptynode{32};
  \draw[state] (00) .. controls +(-45-\dangle:\step) and +(135+\dangle:\step).. (33);
  \draw[state] (20) -- (31);
  \draw[state] (01) -- (12) -- (23);
  \draw[state] (02) -- (11);
  \draw[state] (13) -- (22);
  \draw[state] (03) .. controls +(225-\dangle:\step) and +(45+\dangle:\step).. (21) -- (30);
  \path (31) + (0.5*\step,-1.2*\step) node {$K(P_{ABE}) \geq \csuse{4x4}$};
}
 & \csuse{P4x4} \newline \vspace{8pt} \newline \csuse{Q4x4} \\ \hline
  
  \drawAB{4}{5}{
  \draw[state] (01) .. controls +(-45-\dangle:\step) and +(135+\dangle:\step).. (34);
  \draw[state] (00) .. controls +(-45+\dangle:\step) and +(135-\dangle:\step).. (33);
  \draw[state] (10) -- (21) -- (32);
  \draw[state] (02) -- (13) -- (24);
  \draw[state] (03) -- (12) .. controls +(225+\dangle:\step) and +(45-\dangle:\step).. (30);
  \draw[state] (14) -- (23);
  \draw[state] (20) -- (11);
  \draw[state] (04) .. controls +(225-\dangle:\step) and +(45+\dangle:\step).. (22) -- (31);
  \path (32) + (0,-1.2*\step) node {$K(P_{ABE}) \geq \csuse{4x5}$};
}
 & \csuse{P4x5} \newline \vspace{8pt} \newline \csuse{Q4x5} \\ \hline
  
  \drawAB{5}{6}{
  \emptynode{03};
  \emptynode{04};
  \emptynode{30};
  \emptynode{31};
  \emptynode{40};
  \emptynode{44};
  \draw[state] (01) -- (12) .. controls +(-45-\dangle:\step) and +(135+\dangle:\step).. (45);
  \draw[state] (10) -- (21);
  \draw[state] (20) -- (11) -- (02);
  \draw[state] (00) .. controls +(-45+\dangle:\step) and +(135-\dangle:\step).. (22);
  \draw[state] (41) .. controls +( 45+\dangle:\step) and +(225-\dangle:\step).. (05);
  \draw[state] (32) -- (43);
  \draw[state] (13) -- (24) -- (35);
  \draw[state] (14) -- (23);
  \draw[state] (25) -- (34);
  \draw[state] (15) .. controls +(225-\dangle:\step) and +( 45+\dangle:\step).. (33) -- (42);
  \path (42) + (0.5*\step,-1.2*\step) node {$K(P_{ABE}) \geq \csuse{5x6}$};
}
 & \csuse{P5x6} \newline \vspace{8pt} \newline \csuse{Q5x6} \\ \hline
  
  \drawAB{6}{5}{
  \emptynode{30};
  \emptynode{40};
  \emptynode{03};
  \emptynode{13};
  \emptynode{04};
  \emptynode{44};
  \draw[state] (10) -- (21) .. controls +(-45+\dangle:\step) and +(135-\dangle:\step).. (54);
  \draw[state] (01) -- (12);
  \draw[state] (02) -- (11) -- (20);
  \draw[state] (00) .. controls +(-45+\dangle:\step) and +(135-\dangle:\step).. (22);
  \draw[state] (14) .. controls +(225-\dangle:\step) and +( 45+\dangle:\step).. (50);
  \draw[state] (23) -- (34);
  \draw[state] (31) -- (42) -- (53);
  \draw[state] (41) -- (32);
  \draw[state] (52) -- (43);
  \draw[state] (51) .. controls +( 45+\dangle:\step) and +(225-\dangle:\step).. (33) -- (24);
  \path (52) + (0,-1.2*\step) node {$K(P_{ABE}) \geq \csuse{6x5}$};
}
 & \csuse{P6x5} \newline \vspace{8pt} \newline \csuse{Q6x5} \\ \hline
\end{tabular}

\end{document}